\newtheorem{property}{Property}
\newtheorem{notation}{Notation}
\newcommand{\claimend}{$\hfill\lhd$\medskip{}}
\newcommand{\UU}{\mathcal{U}}
\title{When Should You Wait Before Updating?}
\date{}
\author{Swan Dubois}
{Sorbonne Université, CNRS, LIP6, DELYS, France}%
{swan.dubois@lip6.fr}%
{https://orcid.org/0000-0003-2320-6178}%
{}
\author{Laurent Feuilloley}
{Univ Lyon, CNRS, INSA Lyon, UCBL, LIRIS, UMR5205, Villeurbanne, France }%
{laurent.feuilloley@cnrs.fr}%
{https://orcid.org/0000-0002-3994-0898}%
{}
\author{Franck Petit}
{Sorbonne Université, CNRS, LIP6, DELYS, France}%
{franck.petit@lip6.fr}%
{https://orcid.org/0000-0002-0948-7842}%
{}
\author{Mika\"el Rabie}
{Université Paris Cité, CNRS, IRIF, Paris, France}%
{mikael.rabie@gmail.com}%
{}%
{}
\keywords{Robustness, dynamic network, temporal graphs, edge removal, connectivity, footprint, packing/covering problems, maximal independent set, maximal matching, minimum dominating set, perfect matching, NP-hardness}
\authorrunning{S. Dubois, L. Feuilloley, F. Petit and M. Rabie}
\begin{document}
\maketitle

\begin{abstract}
Consider a dynamic network and a given distributed problem. 
At any point in time, there might exist several solutions that are equally good with respect to the problem specification, but that are different from an algorithmic perspective, because some could be  easier to update than others when the network changes. 
In other words, one would prefer to have a solution that is more robust to topological changes in the network; and in this direction the best scenario would be that the solution remains correct despite the dynamic of the network.

In~\cite{CasteigtsDPR20}, the authors introduced a very strong robustness criterion: they required that for any removal of edges that maintain the network connected, the solution remains valid. 
They focus on the  maximal independent set problem, and their approach consists in characterizing the graphs in which there exists a robust solution (the existential problem), or even stronger, where any solution is robust (the universal problem). 
As the robustness criteria is very demanding, few graphs have a robust solution, and even fewer are such that all of their solutions are robust. 
In this paper, we ask the following question: \textit{Can we have robustness for a larger class of networks, if we bound the number $k$ of edge removals allowed}? 

To answer this question, we consider three classic problems: maximal independent set, minimal dominating set and maximal matching. 
For the universal problem, the answers for the three cases are surprisingly different. 
For minimal dominating set, the class does not depend on the number of edges removed. For maximal matching, removing only one edge defines a robust class related to perfect matchings, but for all other bounds $k$, the class is the same as for an arbitrary number of edge removals.
Finally, for maximal independent set, there is a strict hierarchy of classes: the class for the bound $k$ is strictly larger than the class for bound $k+1$.

For the robustness notion of \cite{CasteigtsDPR20}, no characterization of the class for the existential problem is
known, only a polynomial-time recognition algorithm. We show that the situation is even worse for bounded $k$:
even for $k=1$, it is NP-hard to decide whether a graph has a robust maximal independent set. 
\end{abstract}



\section{Introduction}


In the field of computer networks, the phrase ``\textit{dynamic networks}'' refers to many different realities, ranging
from static wired networks in which links can be unstable, up to wireless ad hoc networks in which entities directly communicate with each other by radio.  
In the latter case, entities may join, leave, or even move inside the network at any time in completely unpredictable ways. 
A common feature of all these networks is that communication links keep changing over time. 
Because of this aspect, algorithmic engineering is far more difficult than in fixed static networks.  
Indeed, solutions must be able to
adapt to incessant topological changes. This becomes particularly challenging when it comes to maintaining a single
leader~\cite{CF13r} or a (supposed to be) ``static'' covering data structure, for instance, a spanning tree, a node
coloring, a Maximal Independant Set (MIS), a Minimal Dominating Set (MDS), or a Maximal Matching (MM).
Most of the time, to overcome such topological changes, algorithms compute and recompute their solution to try to be as close as possible to a correct solution in all circumstances.  

Of course, when the network dynamics is high, meaning that topological changes are extremely frequent, it sometimes becomes impossible to obtain an acceptable solution. 
In practice, the correctness requirements of the algorithm are most often relaxed in order to approach the desired behavior, while amortizing the recomputation cost.  
Actually, this sometimes leads to reconsider the very nature of the
problems, for example: looking for a ``moving leader'', a leader or a spanning tree per connected component, a temporal
dominated set, an evolving MIS, a best-effort broadcast, \textit{etc.}--- we refer
to~\cite{CF13r,CFQS12} for more examples.

In this paper, we address the problem of network dynamics under an approach similar to the one introduced
in~\cite{BDKP15,CasteigtsDPR20}: \textit{To what extent of network dynamics can a computation be performed without relaxing its specification?}  
Before going any further into our motivation, let us review related work on which our study relies.  

Numerous models for dynamic networks have been proposed during the last decades--refer to~\cite{CF13r} for a
comprehensive list of models-- some of them aiming at unifying previous modeling approaches, mainly~\cite{CFQS12,XFJ03}.
As is often the case, in this work, the network is modeled as a graph, where the set of vertices (also called nodes) is fixed, while the communication
links are represented by a set of edges appearing and disappearing unexpectedly over the time. 
Without extra assumptions, this modeling includes all possibilities that can occur over the time, for example, the
network topology may include no edges at some instant, or it may also happen that some edge present at some time disappears definitively after that. 
According to different assumptions on the appearance and disappearance (frequency, synchrony, duration, etc.), the dynamics of temporal networks can be classified in many classes~\cite{CFQS12}. 

One of these classes, Class $\mathcal{TC^R}$, is particularly
important. 
In this class, a temporal path between any two vertices appears infinitely often. 
This class is arguably the most natural and versatile generalization of the notion of connectivity from static networks to dynamic networks: every vertex is able to send (not necessarily directly) a message to any other vertex at any time.  

For a dynamic network of the class $\mathcal{TC^R}$ on a vertex set $V$, one can partition $V\times V$ into three sets: the edges that are present infinitely often over the time --called \emph{recurrent} edges--, the edges that are present only a finite number of times --called \emph{eventually absent} edges--, and the edges that are never present. 
The union of the first two sets defines a graph called the {\em footprint} of the network~\cite{CFQS12},
while its restriction to the edges that are infinitely often present is called the {\em eventual footprint}~\cite{BDKP16}.
In~\cite{BDKP16}, the authors prove that Class~$\mathcal{TC^R}$ is actually the set of dynamic networks whose {\em eventual footprint} is connected. 

In conclusion, from a distributed computing point of view, it is more than reasonable to consider only dynamic networks such that some of their edges are recurrent and their union does form a {\em connected} spanning subgraph of their footprint.

Unfortunately, it is impossible for a node to distinguish between a recurrent and an eventually absent edge. Therefore, the best the nodes can do is to compute a solution relative to the footprint, hoping that this solution 
still makes sense in the eventual footprint, whatever it is. 
In \cite{CasteigtsDPR20}, the authors introduce the concept of \textit{robustness} to capture this intuition, defined as follows: 
\begin{definition}[Robustness]
A property $P$
is robust over a graph $G$ if and only if $P$ is satisfied in every connected spanning subgraph of $G$ (including $G$
itself).  
\end{definition}

Another way to phrase this definition is to say that \emph{a property $P$ is robust if it is still satisfied when we remove any number of edges, as long as the graph stays connected}.

In~\cite{CasteigtsDPR20}, the authors  focus on the problem of maximal independent set (MIS). That is, they study the cases where a set of vertices can keep being an MIS even if we remove edges. They structure their results around two questions:
\medskip

\noindent\textbf{Universal question:} For which networks are \emph{all the solutions} robust against any edge removals that do not disconnect the graph?

\medskip

\noindent\textbf{Existential question:} For which networks does \emph{there exist a solution} that is robust against any edge removals that do not disconnect the graph?

\medskip

The authors in~\cite{CasteigtsDPR20} establish a characterization of the networks that answer the first questions for
the MIS problem. Still for the same problem, they provide a polynomial-time algorithm to decide whether a network answers the second question.

Note that the study of robustness was also very recently addressed for the case of metric properties in~\cite{CasteigtsCHL22}.
In that paper, the authors show that deciding whether the distance between two given vertices is robust can be done in
linear time.  However, they also show that deciding whether the diameter is robust or not is coNP-complete. 

\subsection{Our approach}

Our goal is to go beyond \cite{CasteigtsDPR20}, and to get both a more fine-grained and a broader understanding of the notion of robustness. 

Let us start with the fine-grain dimension. 
In~\cite{CasteigtsDPR20}, a solution had to be robust against any number of edge removals as long as the graph remains connected. 
In this paper, we want to understand what are the structures that are robust against $k$ edge removals while keeping the connectivity constraint, for any specific $k$, adding granularity to the notion.
We call this concept $k$-robustness (see formal definition below) and we focus on the universal and the existential question of \cite{CasteigtsDPR20} for this fined-grain version of the robustness.


Now for the broader dimension, let us discuss the problems studied. In~\cite{CasteigtsDPR20}, the problem studied is MIS, which is a good choice in the sense that it leads to
an interesting landscape. 
Indeed, robustness being a very demanding property, one has to find problems to which it can
apply without leading to trivial answers. 
In this direction, one wants to look at local problems, because a modification will only have consequences in some
neighborhood and not on the whole graph, which leaves the hope that it actually does not affect the correctness at
all.
Among the classic local problems, as studied in the LOCAL model (see~\cite{Peleg00} for the original definition and~\cite{HirvonenS20} for   a recent book), there are mainly coloring problems and packing/covering problems. 
The coloring problems (with a fixed number of colors) are not meaningful in our context: an edge removal can only help. 
But the packing/covering problems are all interesting, thus we widen the scope to cover three classic problems in this paper:
maximal independent set (MIS) as before, but also maximal matching (MM) and minimal dominating set (MDS).



To help the reader grasp some intuition on our approach, let us illustrate the $1$-robustness for the maximal
matching, \emph{i.e.} a set of edges that do not share vertices
and that is maximal in the sense that no edge can be added. 
To be $1$-robust, a matching must still be maximal after the removal of \emph{one arbitrary} edge that does not disconnect the graph. 
Let us go over various configurations illustrated in Figure~\ref{fig:MM} (the matched edges are bold ones).

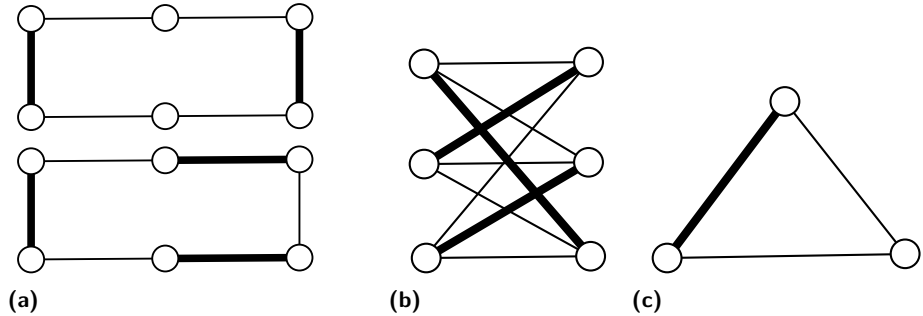
\begin{figure}[!h]
  \centering
  \begin{minipage}[b]{.35\textwidth}
  
	\scalebox{0.95}{\tikzset{every picture/.style={line width=0.75pt}} 

\begin{tikzpicture}[x=0.75pt,y=0.75pt,yscale=-1,xscale=1]

\draw [line width=3]    (106.85,124.85) -- (106.85,176.85) ;
\draw [line width=3]    (247.85,123.85) -- (247.85,175.85) ;
\draw    (106.85,124.85) -- (247.85,123.85) ;
\draw  [fill={rgb, 255:red, 255; green, 255; blue, 255 }  ,fill opacity=1 ] (100,124.85) .. controls (100,121.07) and (103.07,118) .. (106.85,118) .. controls (110.63,118) and (113.7,121.07) .. (113.7,124.85) .. controls (113.7,128.63) and (110.63,131.7) .. (106.85,131.7) .. controls (103.07,131.7) and (100,128.63) .. (100,124.85) -- cycle ;
\draw  [fill={rgb, 255:red, 255; green, 255; blue, 255 }  ,fill opacity=1 ] (170.5,124.35) .. controls (170.5,120.57) and (173.57,117.5) .. (177.35,117.5) .. controls (181.13,117.5) and (184.2,120.57) .. (184.2,124.35) .. controls (184.2,128.13) and (181.13,131.2) .. (177.35,131.2) .. controls (173.57,131.2) and (170.5,128.13) .. (170.5,124.35) -- cycle ;
\draw  [fill={rgb, 255:red, 255; green, 255; blue, 255 }  ,fill opacity=1 ] (241,123.85) .. controls (241,120.07) and (244.07,117) .. (247.85,117) .. controls (251.63,117) and (254.7,120.07) .. (254.7,123.85) .. controls (254.7,127.63) and (251.63,130.7) .. (247.85,130.7) .. controls (244.07,130.7) and (241,127.63) .. (241,123.85) -- cycle ;
\draw    (106.85,176.85) -- (247.85,175.85) ;
\draw  [fill={rgb, 255:red, 255; green, 255; blue, 255 }  ,fill opacity=1 ] (100,176.85) .. controls (100,173.07) and (103.07,170) .. (106.85,170) .. controls (110.63,170) and (113.7,173.07) .. (113.7,176.85) .. controls (113.7,180.63) and (110.63,183.7) .. (106.85,183.7) .. controls (103.07,183.7) and (100,180.63) .. (100,176.85) -- cycle ;
\draw  [fill={rgb, 255:red, 255; green, 255; blue, 255 }  ,fill opacity=1 ] (170.5,176.35) .. controls (170.5,172.57) and (173.57,169.5) .. (177.35,169.5) .. controls (181.13,169.5) and (184.2,172.57) .. (184.2,176.35) .. controls (184.2,180.13) and (181.13,183.2) .. (177.35,183.2) .. controls (173.57,183.2) and (170.5,180.13) .. (170.5,176.35) -- cycle ;
\draw  [fill={rgb, 255:red, 255; green, 255; blue, 255 }  ,fill opacity=1 ] (241,175.85) .. controls (241,172.07) and (244.07,169) .. (247.85,169) .. controls (251.63,169) and (254.7,172.07) .. (254.7,175.85) .. controls (254.7,179.63) and (251.63,182.7) .. (247.85,182.7) .. controls (244.07,182.7) and (241,179.63) .. (241,175.85) -- cycle ;

\end{tikzpicture}}\\[-0.3cm]
	
	\scalebox{0.95}{\tikzset{every picture/.style={line width=0.75pt}} 

\begin{tikzpicture}[x=0.75pt,y=0.75pt,yscale=-1,xscale=1]

\draw [line width=3]    (177.35,176.35) -- (247.85,175.85) ;
\draw [line width=3]    (177.35,124.35) -- (247.85,123.85) ;
\draw [line width=3]    (106.85,124.85) -- (106.85,176.85) ;
\draw [line width=0.75]    (247.85,123.85) -- (247.85,175.85) ;
\draw    (106.85,124.85) -- (247.85,123.85) ;
\draw  [fill={rgb, 255:red, 255; green, 255; blue, 255 }  ,fill opacity=1 ] (100,124.85) .. controls (100,121.07) and (103.07,118) .. (106.85,118) .. controls (110.63,118) and (113.7,121.07) .. (113.7,124.85) .. controls (113.7,128.63) and (110.63,131.7) .. (106.85,131.7) .. controls (103.07,131.7) and (100,128.63) .. (100,124.85) -- cycle ;
\draw  [fill={rgb, 255:red, 255; green, 255; blue, 255 }  ,fill opacity=1 ] (170.5,124.35) .. controls (170.5,120.57) and (173.57,117.5) .. (177.35,117.5) .. controls (181.13,117.5) and (184.2,120.57) .. (184.2,124.35) .. controls (184.2,128.13) and (181.13,131.2) .. (177.35,131.2) .. controls (173.57,131.2) and (170.5,128.13) .. (170.5,124.35) -- cycle ;
\draw  [fill={rgb, 255:red, 255; green, 255; blue, 255 }  ,fill opacity=1 ] (241,123.85) .. controls (241,120.07) and (244.07,117) .. (247.85,117) .. controls (251.63,117) and (254.7,120.07) .. (254.7,123.85) .. controls (254.7,127.63) and (251.63,130.7) .. (247.85,130.7) .. controls (244.07,130.7) and (241,127.63) .. (241,123.85) -- cycle ;
\draw    (106.85,176.85) -- (247.85,175.85) ;
\draw  [fill={rgb, 255:red, 255; green, 255; blue, 255 }  ,fill opacity=1 ] (100,176.85) .. controls (100,173.07) and (103.07,170) .. (106.85,170) .. controls (110.63,170) and (113.7,173.07) .. (113.7,176.85) .. controls (113.7,180.63) and (110.63,183.7) .. (106.85,183.7) .. controls (103.07,183.7) and (100,180.63) .. (100,176.85) -- cycle ;
\draw  [fill={rgb, 255:red, 255; green, 255; blue, 255 }  ,fill opacity=1 ] (170.5,176.35) .. controls (170.5,172.57) and (173.57,169.5) .. (177.35,169.5) .. controls (181.13,169.5) and (184.2,172.57) .. (184.2,176.35) .. controls (184.2,180.13) and (181.13,183.2) .. (177.35,183.2) .. controls (173.57,183.2) and (170.5,180.13) .. (170.5,176.35) -- cycle ;
\draw  [fill={rgb, 255:red, 255; green, 255; blue, 255 }  ,fill opacity=1 ] (241,175.85) .. controls (241,172.07) and (244.07,169) .. (247.85,169) .. controls (251.63,169) and (254.7,172.07) .. (254.7,175.85) .. controls (254.7,179.63) and (251.63,182.7) .. (247.85,182.7) .. controls (244.07,182.7) and (241,179.63) .. (241,175.85) -- cycle ;

\end{tikzpicture}}
  \subcaption{}\label{fig:mm-a}
  \end{minipage}
  \begin{minipage}[b]{.22\textwidth}
  \scalebox{1.05}{
\tikzset{every picture/.style={line width=0.75pt}} 

\begin{tikzpicture}[x=0.75pt,y=0.75pt,yscale=-1,xscale=1]
\draw    (171.85,184.85) -- (250,91) ;

\draw    (171.85,139.85) -- (251,184) ;
\draw    (171.85,91.85) -- (250,139) ;
\draw [line width=3]    (172.85,184.85) -- (250,139) ;
\draw [line width=3]    (171.85,139.85) -- (250,91) ;
\draw [line width=3]    (171.85,91.85) -- (251,184) ;
\draw    (171.85,91.85) -- (250,91) ;
\draw  [fill={rgb, 255:red, 255; green, 255; blue, 255 }  ,fill opacity=1 ] (165,91.85) .. controls (165,88.07) and (168.07,85) .. (171.85,85) .. controls (175.63,85) and (178.7,88.07) .. (178.7,91.85) .. controls (178.7,95.63) and (175.63,98.7) .. (171.85,98.7) .. controls (168.07,98.7) and (165,95.63) .. (165,91.85) -- cycle ;
\draw  [fill={rgb, 255:red, 255; green, 255; blue, 255 }  ,fill opacity=1 ] (243.15,91) .. controls (243.15,87.22) and (246.22,84.15) .. (250,84.15) .. controls (253.78,84.15) and (256.85,87.22) .. (256.85,91) .. controls (256.85,94.78) and (253.78,97.85) .. (250,97.85) .. controls (246.22,97.85) and (243.15,94.78) .. (243.15,91) -- cycle ;
\draw    (171.85,139.85) -- (250,139) ;
\draw  [fill={rgb, 255:red, 255; green, 255; blue, 255 }  ,fill opacity=1 ] (165,139.85) .. controls (165,136.07) and (168.07,133) .. (171.85,133) .. controls (175.63,133) and (178.7,136.07) .. (178.7,139.85) .. controls (178.7,143.63) and (175.63,146.7) .. (171.85,146.7) .. controls (168.07,146.7) and (165,143.63) .. (165,139.85) -- cycle ;
\draw  [fill={rgb, 255:red, 255; green, 255; blue, 255 }  ,fill opacity=1 ] (243.15,139) .. controls (243.15,135.22) and (246.22,132.15) .. (250,132.15) .. controls (253.78,132.15) and (256.85,135.22) .. (256.85,139) .. controls (256.85,142.78) and (253.78,145.85) .. (250,145.85) .. controls (246.22,145.85) and (243.15,142.78) .. (243.15,139) -- cycle ;
\draw    (172.85,184.85) -- (251,184) ;
\draw  [fill={rgb, 255:red, 255; green, 255; blue, 255 }  ,fill opacity=1 ] (166,184.85) .. controls (166,181.07) and (169.07,178) .. (172.85,178) .. controls (176.63,178) and (179.7,181.07) .. (179.7,184.85) .. controls (179.7,188.63) and (176.63,191.7) .. (172.85,191.7) .. controls (169.07,191.7) and (166,188.63) .. (166,184.85) -- cycle ;
\draw  [fill={rgb, 255:red, 255; green, 255; blue, 255 }  ,fill opacity=1 ] (244.15,184) .. controls (244.15,180.22) and (247.22,177.15) .. (251,177.15) .. controls (254.78,177.15) and (257.85,180.22) .. (257.85,184) .. controls (257.85,187.78) and (254.78,190.85) .. (251,190.85) .. controls (247.22,190.85) and (244.15,187.78) .. (244.15,184) -- cycle ;

\end{tikzpicture}}
    \subcaption{}\label{fig:mm-b}
  \end{minipage}
  \begin{minipage}[b]{.25\textwidth}
  \scalebox{1.05}{
           \tikzset{every picture/.style={line width=0.75pt}} 

\begin{tikzpicture}[x=0.75pt,y=0.75pt,yscale=-1,xscale=1]

\draw    (396,159) -- (516,157) ;
\draw [line width=3]    (396,159) -- (452,84) ;
\draw    (452,84) -- (509.15,157) ;
\draw  [fill={rgb, 255:red, 255; green, 255; blue, 255 }  ,fill opacity=1 ] (445.15,84) .. controls (445.15,80.22) and (448.22,77.15) .. (452,77.15) .. controls (455.78,77.15) and (458.85,80.22) .. (458.85,84) .. controls (458.85,87.78) and (455.78,90.85) .. (452,90.85) .. controls (448.22,90.85) and (445.15,87.78) .. (445.15,84) -- cycle ;
\draw  [fill={rgb, 255:red, 255; green, 255; blue, 255 }  ,fill opacity=1 ] (389.15,159) .. controls (389.15,155.22) and (392.22,152.15) .. (396,152.15) .. controls (399.78,152.15) and (402.85,155.22) .. (402.85,159) .. controls (402.85,162.78) and (399.78,165.85) .. (396,165.85) .. controls (392.22,165.85) and (389.15,162.78) .. (389.15,159) -- cycle ;
\draw  [fill={rgb, 255:red, 255; green, 255; blue, 255 }  ,fill opacity=1 ] (502.3,157) .. controls (502.3,153.22) and (505.37,150.15) .. (509.15,150.15) .. controls (512.93,150.15) and (516,153.22) .. (516,157) .. controls (516,160.78) and (512.93,163.85) .. (509.15,163.85) .. controls (505.37,163.85) and (502.3,160.78) .. (502.3,157) -- cycle ;

\end{tikzpicture}}
     \subcaption{}\label{fig:mm-c}
  \end{minipage}
  \caption{\label{fig:MM} Three examples of MMs in various graphs.}
\end{figure}

For the two graphs in Figure~\ref{fig:mm-a}, that are cycles of 6 vertices, we can observe that two instances of maximal matching can have different behaviors. 
Indeed, in the top one, if we remove one matched edge, we are left
with a matching that is not maximal in the new graph: the two edges adjacent to the removed one could be added.  By
contrast, in the bottom graph, any edge removal leaves a graph that is still a maximal matching. 
Now, in the graph of Figure~\ref{fig:mm-b}, a complete balanced bipartite graph, all the maximal matchings are identical up to isomorphism. 
After one arbitrary edge removal, we are left with a graph where no new edge can be matched. Therefore in this graph, any matching is robust to one edge removal. Note that this is not true for any number of edge removals, illustrating the fact that $k$-robustness and robustness are not equivalent.
Finally, in Figure~\ref{fig:mm-c}, all the maximal matchings consists of only one edge, and they are not robust to an
edge removal.  Indeed, after the matched edge is removed, one can choose any of the two remaining ones. 

To summarize, Figure~\ref{fig:MM} illustrates the effect of $1$-robustness in three different cases: one where
\textit{some} matchings are $1$-robust, one where \textit{all} matchings are $1$-robust, and one where \textit{no} matching is
$1$-robust.   

\subsection{Our results}

Our first contribution is to introduce the fine-grained version of robustness in Section~\ref{sec:model}. 
After that, every technical section of this paper is devoted to provide some answer to the fine-grained version of one of the two questions highlighted above (existential \textit{vs.} universal) for one of the problems we study. 
Our focus is actually in understanding how do the different settings compare, in terms of both problems and number of removable edges. 

Let us start with the universal question. Here, we prove that the three problems have three different behaviors.

For minimal dominating set, the class of the graphs for which any solution is $k$-robust is exactly the same for every $k$ (a class that already appeared in \cite{CasteigtsDPR20} under the name of \emph{sputnik graphs}) as proved in Section~\ref{sec:MDS}.
 
For maximal matching, the case of $k=1$, which we used previously as an example, is special and draws an interesting connection with perfect matchings,  but then the class is identical for every $k\geq 2$. These results are presented in Section~\ref{sec:MM}.
 
Finally, for maximal independent set, we show in Section~\ref{sec:MIS} that there is a strict hierarchy: the class for $k$ edge removals is strictly smaller than the one for $k-1$. For this case, we do not pinpoint the exact characterization, but give some additional structural results on the classes.  

The existential question is much more challenging. Section~\ref{sec:NPH} presents some preliminary results on the study of this question.
For maximal independent set, we show that for any~$k$, deciding whether a graph has a maximal independent set that is robust to $k$ edge removals is NP-hard. This is the first NP-hardness result for this type of question. 

\section{Model, definitions, and basic properties}
\label{sec:model}

In the paper, except when stated otherwise, the graph is named $G$, the vertex set $V$ and the edge set $E$.

\subsection{Robustness and graph problems}

The key notion of this paper is the one of $k$-robustness.

\begin{definition}
Given a graph problem and a graph, a solution is \emph{$k$-robust} if after the removal of \emph{at most $k$} edges, either the graph is disconnected, or the solution is still valid. 
\end{definition}

Note that $k$-robustness is about removing at most $k$ edges, not exactly $k$ edges.

We will abuse notation and write $\infty$-robust when mentioning the notion of robustness from \cite{CasteigtsDPR20}, with an unbounded number of removals. Hence $k$ is a parameter in $\mathbb{N}\cup \infty$.

\begin{notation}
We define $\UU^k_{P}$ and $\mathcal{E}^k_{P}$ the following way:
\begin{itemize}
\item Let $\UU^k_{P}$ be the class of graphs such that any solution to the problem $P$ is $k$-robust.
\item Let $\mathcal{E}^k_{P}$ be the class of graphs such that there exists a solution to the problem $P$ that is $k$-robust
\end{itemize}
\end{notation}

Note that to easily incorporate the parameter $k$, we decided to not follow the exact same  notations as in~\cite{CasteigtsDPR20}.

\paragraph*{Graph problems.}
We consider three graph problems:
\begin{enumerate}
\item Minimal dominating set (MDS): Select a minimal set of vertices such that every vertex of the graph is either in the set or has a neighbor in the set.
\item Maximal matching (MM): Select a maximal set of edges such that no two selected edges share endpoint.
\item Maximal independent set (MIS): Select a maximal set of vertices such that no two selected vertices share an edge.
\end{enumerate}

A \emph{perfect matching} is a matching where every vertex is matched.
We will also use the notion of \emph{$k$-dominating set}, which is a set of selected vertices such that every vertex is either selected or is adjacent to two selected vertices.
Note that $k$-dominating set sometimes refer to another notion related to the distance to the selected vertices, but this is not our definition.   

\paragraph*{The case of robust maximal matching.}

For maximal matching, the definition of robustness may vary. The definition we take is the following.
A maximal matching $M$ of a graph $G$ is $k$-robust if after removing any set of at most $k$ edges such that the graph $G$ is still connected, what remains of $M$ is a maximal matching of what remains of $G$.

\subsection{Graph notions}
\label{subsec:graph-notions}

We list a few graph theory definitions that we will need.

\begin{definition}
The \emph{neighborhood} of a node $v$, denoted $N(v)$, is the set of nodes that are adjacent to $v$. 
The \emph{closed neighborhood} of a node $v$, denoted $N[v]$, is the neighborhood of $v$, plus $v$ itself.
\end{definition}

\begin{definition}
A graph is $t$-(edge)-connected if, after the removal of any set of $(t-1)$ edges, the graph is still connected. A $t$-(edge)-connected component is a subgraph that is $t$-(edge)-connected.
\end{definition}

In the following we are only interested in edge connectivity thus we will simply write \emph{$t$-connectivity} to refer to \emph{$t$-edge-connectivity}. 
In our proofs, we will use the following easy observation multiple times : in a 2-connected graph every vertex belongs to a cycle.

\begin{definition}
In a connected graph, a \emph{bridge} is an edge whose removal disconnects the graph. 
\end{definition}

\begin{definition}\label{def:join}
Given two graphs $G$ and $H$, the \emph{join} of these two graphs, $join(G,H)$, is the graph made by taking the union of $G$ and $H$, and adding all the possible edges $(u,v)$, with $u\in G$ and $v\in H$. See Figure~\ref{fig:join}.
\end{definition}

\begin{definition}
A \emph{sputnik graph} (\cite{CasteigtsDPR20}) is a graph where every node that is part of a cycle has an \emph{antenna}, that is a neighbor with degree 1. See Figure~\ref{fig:sputnik}.
\end{definition}

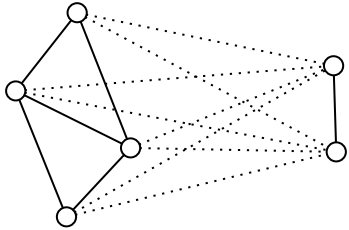
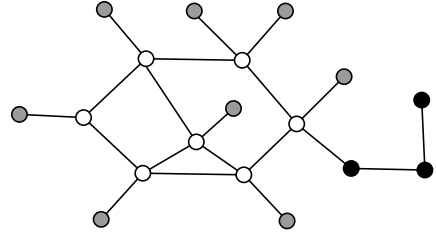
\begin{figure}[!h]
\begin{center}
\begin{subfigure}[t]{0.45 \textwidth}
\tikzset{every picture/.style={line width=0.75pt}} 

\begin{tikzpicture}[x=0.75pt,y=0.75pt,yscale=-1,xscale=1]

\draw    (304.13,111.79) -- (305.46,155.13) ;
\draw    (202.79,153.13) -- (170.79,187.79) ;
\draw    (145.46,124.46) -- (202.79,153.13) ;
\draw    (145.46,124.46) -- (170.79,187.79) ;
\draw    (176.13,85.13) -- (145.46,124.46) ;
\draw    (176.13,85.13) -- (202.79,153.13) ;
\draw  [dash pattern={on 0.84pt off 2.51pt}]  (176.13,85.13) -- (304.13,111.79) ;
\draw  [dash pattern={on 0.84pt off 2.51pt}]  (176.13,85.13) -- (305.46,155.13) ;
\draw  [dash pattern={on 0.84pt off 2.51pt}]  (202.79,153.13) -- (304.13,111.79) ;
\draw  [dash pattern={on 0.84pt off 2.51pt}]  (202.79,153.13) -- (305.46,155.13) ;
\draw  [dash pattern={on 0.84pt off 2.51pt}]  (170.79,187.79) -- (304.13,111.79) ;
\draw  [dash pattern={on 0.84pt off 2.51pt}]  (170.79,187.79) -- (305.46,155.13) ;
\draw  [dash pattern={on 0.84pt off 2.51pt}]  (145.46,124.46) -- (304.13,111.79) ;
\draw  [dash pattern={on 0.84pt off 2.51pt}]  (145.46,124.46) -- (305.46,155.13) ;
\draw  [fill={rgb, 255:red, 255; green, 255; blue, 255 }  ,fill opacity=1 ] (171.33,85.14) .. controls (171.33,82.49) and (173.47,80.34) .. (176.11,80.33) .. controls (178.76,80.33) and (180.91,82.47) .. (180.92,85.11) .. controls (180.93,87.76) and (178.79,89.91) .. (176.14,89.92) .. controls (173.49,89.93) and (171.34,87.79) .. (171.33,85.14) -- cycle ;
\draw  [fill={rgb, 255:red, 255; green, 255; blue, 255 }  ,fill opacity=1 ] (198,153.14) .. controls (197.99,150.49) and (200.13,148.34) .. (202.78,148.33) .. controls (205.43,148.33) and (207.58,150.47) .. (207.59,153.11) .. controls (207.6,155.76) and (205.46,157.91) .. (202.81,157.92) .. controls (200.16,157.93) and (198.01,155.79) .. (198,153.14) -- cycle ;
\draw  [fill={rgb, 255:red, 255; green, 255; blue, 255 }  ,fill opacity=1 ] (140.67,124.48) .. controls (140.66,121.83) and (142.8,119.67) .. (145.45,119.67) .. controls (148.09,119.66) and (150.25,121.8) .. (150.26,124.45) .. controls (150.26,127.09) and (148.12,129.25) .. (145.48,129.26) .. controls (142.83,129.26) and (140.67,127.12) .. (140.67,124.48) -- cycle ;
\draw  [fill={rgb, 255:red, 255; green, 255; blue, 255 }  ,fill opacity=1 ] (166,187.81) .. controls (165.99,185.16) and (168.13,183.01) .. (170.78,183) .. controls (173.43,182.99) and (175.58,185.13) .. (175.59,187.78) .. controls (175.6,190.43) and (173.46,192.58) .. (170.81,192.59) .. controls (168.16,192.6) and (166.01,190.46) .. (166,187.81) -- cycle ;
\draw  [fill={rgb, 255:red, 255; green, 255; blue, 255 }  ,fill opacity=1 ] (299.33,111.81) .. controls (299.33,109.16) and (301.47,107.01) .. (304.11,107) .. controls (306.76,106.99) and (308.91,109.13) .. (308.92,111.78) .. controls (308.93,114.43) and (306.79,116.58) .. (304.14,116.59) .. controls (301.49,116.6) and (299.34,114.46) .. (299.33,111.81) -- cycle ;
\draw  [fill={rgb, 255:red, 255; green, 255; blue, 255 }  ,fill opacity=1 ] (300.67,155.14) .. controls (300.66,152.49) and (302.8,150.34) .. (305.45,150.33) .. controls (308.09,150.33) and (310.25,152.47) .. (310.26,155.11) .. controls (310.26,157.76) and (308.12,159.91) .. (305.48,159.92) .. controls (302.83,159.93) and (300.67,157.79) .. (300.67,155.14) -- cycle ;

\end{tikzpicture}
\vspace{0.2cm}
\caption{\label{fig:join} The join of two graphs: the black edges are the original edges, the doted edges are the one added by the join operation.}
\end{subfigure}
\hspace{0.5cm}
\begin{subfigure}[t]{0.45 \textwidth}
\scalebox{0.8}{
\tikzset{every picture/.style={line width=0.75pt}} 

\begin{tikzpicture}[x=0.75pt,y=0.75pt,yscale=-1,xscale=1]

\draw    (273.13,95.79) -- (300.13,64.79) ;
\draw    (273.13,95.79) -- (307.13,135.79) ;
\draw    (307.13,135.79) -- (336.7,106.1) ;
\draw    (307.13,135.79) -- (341.13,163.79) ;
\draw    (274.13,167.79) -- (307.13,135.79) ;
\draw    (274.13,167.79) -- (301.13,196.79) ;
\draw    (211.13,166.79) -- (274.13,167.79) ;
\draw    (244.49,147.09) -- (274.13,167.79) ;
\draw    (244.49,147.09) -- (267.7,126.1) ;
\draw    (211.13,166.79) -- (244.49,147.09) ;
\draw    (185.13,195.79) -- (211.13,166.79) ;
\draw    (174.13,131.79) -- (211.13,166.79) ;

\draw    (242,65) -- (272,95) ;
\draw    (134.13,129.79) -- (174.13,131.79) ;
\draw    (174.13,131.79) -- (213.13,94.79) ;
\draw    (187.13,63.79) -- (213.13,94.79) ;
\draw    (213.13,94.79) -- (273.13,95.79) ;
\draw    (213.14,99.59) -- (244.49,147.09) ;
\draw  [fill={rgb, 255:red, 255; green, 255; blue, 255 }  ,fill opacity=1 ] (208.33,94.81) .. controls (208.33,92.16) and (210.47,90.01) .. (213.11,90) .. controls (215.76,89.99) and (217.91,92.13) .. (217.92,94.78) .. controls (217.93,97.43) and (215.79,99.58) .. (213.14,99.59) .. controls (210.49,99.6) and (208.34,97.46) .. (208.33,94.81) -- cycle ;
\draw  [fill={rgb, 255:red, 255; green, 255; blue, 255 }  ,fill opacity=1 ] (268.33,95.81) .. controls (268.33,93.16) and (270.47,91.01) .. (273.11,91) .. controls (275.76,90.99) and (277.91,93.13) .. (277.92,95.78) .. controls (277.93,98.43) and (275.79,100.58) .. (273.14,100.59) .. controls (270.49,100.6) and (268.34,98.46) .. (268.33,95.81) -- cycle ;
\draw  [fill={rgb, 255:red, 255; green, 255; blue, 255 }  ,fill opacity=1 ] (302.33,135.81) .. controls (302.33,133.16) and (304.47,131.01) .. (307.11,131) .. controls (309.76,130.99) and (311.91,133.13) .. (311.92,135.78) .. controls (311.93,138.43) and (309.79,140.58) .. (307.14,140.59) .. controls (304.49,140.6) and (302.34,138.46) .. (302.33,135.81) -- cycle ;
\draw  [fill={rgb, 255:red, 255; green, 255; blue, 255 }  ,fill opacity=1 ] (269.33,167.81) .. controls (269.33,165.16) and (271.47,163.01) .. (274.11,163) .. controls (276.76,162.99) and (278.91,165.13) .. (278.92,167.78) .. controls (278.93,170.43) and (276.79,172.58) .. (274.14,172.59) .. controls (271.49,172.6) and (269.34,170.46) .. (269.33,167.81) -- cycle ;
\draw  [fill={rgb, 255:red, 255; green, 255; blue, 255 }  ,fill opacity=1 ] (206.33,166.81) .. controls (206.33,164.16) and (208.47,162.01) .. (211.11,162) .. controls (213.76,161.99) and (215.91,164.13) .. (215.92,166.78) .. controls (215.93,169.43) and (213.79,171.58) .. (211.14,171.59) .. controls (208.49,171.6) and (206.34,169.46) .. (206.33,166.81) -- cycle ;
\draw  [fill={rgb, 255:red, 255; green, 255; blue, 255 }  ,fill opacity=1 ] (169.33,131.81) .. controls (169.33,129.16) and (171.47,127.01) .. (174.11,127) .. controls (176.76,126.99) and (178.91,129.13) .. (178.92,131.78) .. controls (178.93,134.43) and (176.79,136.58) .. (174.14,136.59) .. controls (171.49,136.6) and (169.34,134.46) .. (169.33,131.81) -- cycle ;
\draw  [fill={rgb, 255:red, 255; green, 255; blue, 255 }  ,fill opacity=1 ] (239.7,147.1) .. controls (239.69,144.45) and (241.83,142.3) .. (244.48,142.29) .. controls (247.13,142.28) and (249.28,144.42) .. (249.29,147.07) .. controls (249.3,149.72) and (247.16,151.87) .. (244.51,151.88) .. controls (241.86,151.89) and (239.71,149.75) .. (239.7,147.1) -- cycle ;
\draw  [fill={rgb, 255:red, 155; green, 155; blue, 155 }  ,fill opacity=1 ] (182.33,63.81) .. controls (182.33,61.16) and (184.47,59.01) .. (187.11,59) .. controls (189.76,58.99) and (191.91,61.13) .. (191.92,63.78) .. controls (191.93,66.43) and (189.79,68.58) .. (187.14,68.59) .. controls (184.49,68.6) and (182.34,66.46) .. (182.33,63.81) -- cycle ;
\draw  [fill={rgb, 255:red, 155; green, 155; blue, 155 }  ,fill opacity=1 ] (295.33,64.81) .. controls (295.33,62.16) and (297.47,60.01) .. (300.11,60) .. controls (302.76,59.99) and (304.91,62.13) .. (304.92,64.78) .. controls (304.93,67.43) and (302.79,69.58) .. (300.14,69.59) .. controls (297.49,69.6) and (295.34,67.46) .. (295.33,64.81) -- cycle ;

\draw  [shift={(-1.5cm,0cm)}, fill={rgb, 255:red, 155; green, 155; blue, 155 }  ,fill opacity=1 ] (295.33,64.81) .. controls (295.33,62.16) and (297.47,60.01) .. (300.11,60) .. controls (302.76,59.99) and (304.91,62.13) .. (304.92,64.78) .. controls (304.93,67.43) and (302.79,69.58) .. (300.14,69.59) .. controls (297.49,69.6) and (295.34,67.46) .. (295.33,64.81) -- cycle ;

\draw  [fill={rgb, 255:red, 155; green, 155; blue, 155 }  ,fill opacity=1 ] (129.33,129.81) .. controls (129.33,127.16) and (131.47,125.01) .. (134.11,125) .. controls (136.76,124.99) and (138.91,127.13) .. (138.92,129.78) .. controls (138.93,132.43) and (136.79,134.58) .. (134.14,134.59) .. controls (131.49,134.6) and (129.34,132.46) .. (129.33,129.81) -- cycle ;
\draw  [fill={rgb, 255:red, 155; green, 155; blue, 155 }  ,fill opacity=1 ] (180.33,195.81) .. controls (180.33,193.16) and (182.47,191.01) .. (185.11,191) .. controls (187.76,190.99) and (189.91,193.13) .. (189.92,195.78) .. controls (189.93,198.43) and (187.79,200.58) .. (185.14,200.59) .. controls (182.49,200.6) and (180.34,198.46) .. (180.33,195.81) -- cycle ;
\draw  [fill={rgb, 255:red, 155; green, 155; blue, 155 }  ,fill opacity=1 ] (296.33,196.81) .. controls (296.33,194.16) and (298.47,192.01) .. (301.11,192) .. controls (303.76,191.99) and (305.91,194.13) .. (305.92,196.78) .. controls (305.93,199.43) and (303.79,201.58) .. (301.14,201.59) .. controls (298.49,201.6) and (296.34,199.46) .. (296.33,196.81) -- cycle ;
\draw  [fill={rgb, 255:red, 155; green, 155; blue, 155 }  ,fill opacity=1 ] (331.91,106.11) .. controls (331.9,103.47) and (334.04,101.31) .. (336.69,101.31) .. controls (339.33,101.3) and (341.49,103.44) .. (341.49,106.09) .. controls (341.5,108.73) and (339.36,110.89) .. (336.71,110.89) .. controls (334.07,110.9) and (331.91,108.76) .. (331.91,106.11) -- cycle ;
\draw  [fill={rgb, 255:red, 0; green, 0; blue, 0 }  ,fill opacity=1 ] (336.33,163.81) .. controls (336.33,161.16) and (338.47,159.01) .. (341.11,159) .. controls (343.76,158.99) and (345.91,161.13) .. (345.92,163.78) .. controls (345.93,166.43) and (343.79,168.58) .. (341.14,168.59) .. controls (338.49,168.6) and (336.34,166.46) .. (336.33,163.81) -- cycle ;
\draw  [fill={rgb, 255:red, 0; green, 0; blue, 0 }  ,fill opacity=1 ] (382.33,164.81) .. controls (382.33,162.16) and (384.47,160.01) .. (387.11,160) .. controls (389.76,159.99) and (391.91,162.13) .. (391.92,164.78) .. controls (391.93,167.43) and (389.79,169.58) .. (387.14,169.59) .. controls (384.49,169.6) and (382.34,167.46) .. (382.33,164.81) -- cycle ;
\draw    (341.13,163.79) -- (387.13,164.79) ;
\draw  [fill={rgb, 255:red, 155; green, 155; blue, 155 }  ,fill opacity=1 ] (262.91,126.11) .. controls (262.9,123.47) and (265.04,121.31) .. (267.69,121.31) .. controls (270.33,121.3) and (272.49,123.44) .. (272.49,126.09) .. controls (272.5,128.73) and (270.36,130.89) .. (267.71,130.89) .. controls (265.07,130.9) and (262.91,128.76) .. (262.91,126.11) -- cycle ;
\draw  [fill={rgb, 255:red, 0; green, 0; blue, 0 }  ,fill opacity=1 ] (380.33,120.81) .. controls (380.33,118.16) and (382.47,116.01) .. (385.11,116) .. controls (387.76,115.99) and (389.91,118.13) .. (389.92,120.78) .. controls (389.93,123.43) and (387.79,125.58) .. (385.14,125.59) .. controls (382.49,125.6) and (380.34,123.46) .. (380.33,120.81) -- cycle ;
\draw    (385.13,120.79) -- (387.13,164.79) ;

\end{tikzpicture}}
\vspace{0.2cm}
\caption{\label{fig:sputnik}A sputnik graph. The white vertices are part a cycles, the grey vertices are their antennas, and the black vertices do not belong to any cycle, nor are antennas.}
\end{subfigure}
\end{center}
\caption{
Illustration of the definitions of Subsection~\ref{subsec:graph-notions}.
}
\end{figure}

\subsection{Basic properties}

The following properties follow from the definitions.

\begin{property}\label{prop:basic-inclusions}
For any problem $P$, for any $k$, $\UU^{k+1}_{P}\subseteq \UU^{k}_{P}$ and $\mathcal{E}^{k+1}_{P}\subseteq \mathcal{E}^{k}_{P}$.
\end{property}

In particular, $\UU^{\infty}_{P}\subseteq \UU^{k}_{P} \subseteq \UU^{1}_{P}$ and $\mathcal{E}^{\infty}_{P}\subseteq \mathcal{E}^{k}_{P} \subseteq \mathcal{E}^{1}_{P}$, for all $k$.

\begin{property}\label{prop:connectivity}
If a graph is \emph{$(k+1)$-connected} then a solution is $k$-robust if and only if after the removal of any set of $k$ edges the solution is still correct.
\end{property}


\section{Minimal dominating set}
\label{sec:MDS}

\begin{theorem}
\label{thm:MDS}
For all $k$ in $\mathbb{N}\cup \infty$, $\UU_{MDS}^k$ is the set of sputnik graphs.
\end{theorem}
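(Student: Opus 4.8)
The plan is to show that all the classes $\UU^k_{MDS}$ are squeezed between the two extreme ones. By Property~\ref{prop:basic-inclusions} we already have $\UU^{\infty}_{MDS}\subseteq \UU^{k}_{MDS}\subseteq \UU^{1}_{MDS}$ for every $k$, so it is enough to prove two inclusions: every sputnik graph lies in $\UU^{\infty}_{MDS}$, and every graph of $\UU^{1}_{MDS}$ is a sputnik graph. Together these force $\UU^{k}_{MDS}$ to coincide with the sputnik class for all $k$.

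For the first inclusion I would take a sputnik graph $G$, a minimal dominating set $D$, and an arbitrary connected spanning subgraph $H$, and argue that $D$ is still a minimal dominating set of $H$. The workhorse is the trivial remark that $H$ must contain every bridge of $G$ (a missing bridge $e$ would make $H$ a spanning subgraph of the disconnected graph $G-e$). First, $D$ still dominates $H$: if some $u\notin D$ had no $D$-neighbour in $H$, then, as $D$ dominates $G$, some edge $ud$ with $d\in D$ is absent from $H$; by the remark $ud$ is not a bridge, so $u$ lies on a cycle of $G$; the sputnik property gives $u$ a degree-$1$ neighbour $a$, which must be dominated in $G$ and hence (since $u\notin D$) lies in $D$; but $ua$ is a bridge, hence present in $H$, so $a$ dominates $u$ in $H$ — contradiction. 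Then minimality transfers for free: a private neighbour $w$ of $v\in D$ in $G$ (so $N_G[w]\cap D=\{v\}$) satisfies $N_H[w]\cap D\subseteq\{v\}$, and since $D$ dominates $H$ this set is exactly $\{v\}$, so $w$ stays private. Thus $D$ is $\infty$-robust.

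For the second inclusion I would argue the contrapositive: assuming $G$ is not sputnik, I produce a minimal dominating set and one non-bridge edge whose removal destroys it. The mechanism is uniform: if $xy$ is a non-bridge edge and $D$ a minimal dominating set with $x\notin D$ and $N_G[x]\cap D=\{y\}$, then deleting $xy$ keeps $G$ connected but leaves $x$ undominated. A short argument shows such a $D$ exists exactly when every vertex $z$ with $N[z]\subseteq N[x]$ belongs to $N[y]$: under that hypothesis one starts from $\{y\}$ and repeatedly dominates each leftover vertex $z$ by a node of $N[z]\setminus N[x]$, getting a dominating set meeting $N[x]$ only in $y$, then shrinks it to a minimal one, noting $y$ cannot be dropped since that would undominate $x$. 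So the whole task reduces to finding a vertex $x$ on a cycle with a cycle-neighbour $y$ containing, in $N[y]$, every $z$ with $N[z]\subseteq N[x]$; call such a $z\neq x$ a \emph{shadowed neighbour} of $x$.

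To build this pair I would, among all vertices that lie on a cycle and have no antenna (a set that is nonempty precisely because $G$ is not sputnik), pick $v$ minimizing $|N[v]|$. If $v$ has no shadowed neighbour, any cycle-neighbour $y$ of $v$ works. Otherwise let $w$ be a shadowed neighbour of $v$; a small descent lemma shows $w$ is again on a cycle and antenna-free (all its neighbours lie in $N[v]$, so a neighbour $w'\neq v$ of $w$ makes a triangle $vww'$, and a degree-$1$ neighbour of $w$ would be an antenna of $v$). Minimality then gives $|N[w]|=|N[v]|$, hence $N[w]=N[v]$; so $vw$ lies on a triangle (pick any third vertex of $N[v]$) and is a non-bridge, while every shadowed neighbour of $v$ sits in $N(v)\subseteq N[v]=N[w]$. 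Hence $(x,y)=(v,w)$ is the pair we wanted. I expect the genuinely delicate point to be exactly this ``twin'' situation, where $v$ cannot escape its shadowed neighbour; choosing $v$ to minimize $|N[v]|$ among antenna-free cycle vertices, together with the descent lemma, is what makes the argument close.
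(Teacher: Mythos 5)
Your proof is correct, and it reaches the theorem by a route that differs from the paper's in two respects. First, the paper does not reprove the $k=\infty$ case at all: it imports from~\cite{CasteigtsDPR20} that $\UU^{\infty}_{MDS}$ is exactly the sputnik class, and then (via Property~\ref{prop:basic-inclusions}) only has to show $\UU^{1}_{MDS}\subseteq$ sputnik; you instead give a short self-contained argument that sputnik graphs are in $\UU^{\infty}_{MDS}$ (bridges survive in any connected spanning subgraph, an undominated vertex would lie on a cycle and hence have an antenna forced into $D$, and private neighbours persist), which makes the whole theorem independent of the external characterization. Second, for the inclusion $\UU^{1}_{MDS}\subseteq$ sputnik, both proofs pick a cycle vertex without antenna and split on whether its closed neighbourhood contains what the paper calls an \emph{inside node} (identical to your \emph{shadowed neighbour}), but the constructions differ: the paper, when an inside node $v$ of $u$ exists, simply takes $D=V\setminus(N[v]\setminus u)$, minimalizes, and removes $(u,v)$ so that the \emph{inside node} becomes undominated, whereas you keep the original vertex as the one to be undominated, which forces the extremal choice (minimize $|N[v]|$ among antenna-free cycle vertices) and the twin conclusion $N[v]=N[w]$; your unifying lemma (there is a minimal dominating set whose unique member in $N[x]$ is $y$ iff every $z$ with $N[z]\subseteq N[x]$ lies in $N[y]$) is a nice reusable statement, but it costs you the descent/twin analysis that the paper avoids. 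One point to make explicit when writing this up: in the descent step you should say why $w$ has a neighbour $w'\neq v$ at all (if not, $w$ would itself be a degree-one neighbour, i.e.\ an antenna, of $v$, contradicting the choice of $v$); as with the sub-case $a=v$, this is immediate but currently unstated.
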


\begin{proof}
We know from~\cite{CasteigtsDPR20} that the theorem holds for $k=\infty$.
Hence, thanks to Property~\ref{prop:basic-inclusions}, it is sufficient to prove that the theorem is true for $k=1$.
For the sake of contradiction, consider a graph~$G$ in $\UU_{MDS}^1$ that is not a sputnik graph.
Then there is a node $u$ that belongs to a cycle, and that has no antenna. Let $S$ be the closed neighborhood of $u$, $S=N[u]$. 
We say that a node of $S$, different from $u$, is an \emph{inside node} if it is only connected to nodes in~$S$.
We now consider two cases depending on whether there is an inside node or not. See Figure~\ref{fig:proof-MDS}.

\begin{enumerate}
\item Suppose there exists an inside node $v$. Note that $v$ has at least one neighbor different from $u$ because otherwise it would be an antenna.
Let the set $W$ be the closed neighborhood of $v$, except~$u$. The set $D=V\setminus W$ is a dominating set of the graph, because all the nodes either  belong to $D$ or are neighbors of $u$ (which belongs to $D$).
Now, we transform $D$ into a \emph{minimal} dominating set greedily: we remove nodes from $D$ in an arbitrary order, until no more nodes can be removed without making $D$ non-dominating. 
We claim that this minimal dominating set is not 1-robust.
Indeed, if we remove the edge $(u,v)$, $v$ is not covered any more (none of its current neighbors belongs to $D$), and the graph is still connected (because $v$ has a neighbor different from $u$). 

\item Suppose there is no inside vertex. 
Let $a$ be a neighbor of $u$ in the cycle. Let $W$ be the set $S\setminus a$. 
Again we claim that $V \setminus W$ is a dominating set. 
Indeed, because there is no inside node, every node in $S$ different from $u$ is covered by node outside $W$, and $u$ is covered by~$a$, which belongs to $V\setminus W$.
As before we can make this set an MDS by removing nodes greedily, and again we claim it is not 1-robust.  
Indeed, if we remove the edge $(u,a)$, we do not disconnect the graph (because of the cycle containing $u$), and $u$ is left uncovered.
\end{enumerate}
\end{proof}

\begin{figure}[!h]
\begin{center}
\begin{tabular}{cc}
\tikzset{every picture/.style={line width=0.75pt}} 

\begin{tikzpicture}[x=0.75pt,y=0.75pt,yscale=-1,xscale=1]

\draw  [dash pattern={on 0.84pt off 2.51pt}]  (268.5,53) -- (301,41) ;
\draw  [dash pattern={on 0.84pt off 2.51pt}]  (268.5,53) -- (307,67) ;
\draw  [dash pattern={on 4.5pt off 4.5pt}] (140.7,48.1) .. controls (160.7,38.1) and (218.3,37.9) .. (230,50) .. controls (241.7,62.1) and (244.7,113.1) .. (229.7,129.1) .. controls (214.7,145.1) and (146.7,145.1) .. (135.7,129.1) .. controls (124.7,113.1) and (120.7,58.1) .. (140.7,48.1) -- cycle ;
\draw  [color={rgb, 255:red, 255; green, 255; blue, 255 }  ,draw opacity=1 ][fill={rgb, 255:red, 255; green, 255; blue, 255 }  ,fill opacity=1 ] (205,60) -- (275,60) -- (275,123) -- (205,123) -- cycle ;
\draw    (240,91.5) -- (275,128.5) ;
\draw    (240,91.5) -- (236,119.5) ;
\draw    (236,119.5) -- (275,128.5) ;
\draw    (268.5,53) -- (275,128.5) ;
\draw    (235.5,64.5) -- (268.5,53) ;
\draw    (240,91.5) -- (268.5,53) ;
\draw    (235.5,64.5) -- (240,91.5) ;
\draw  [fill={rgb, 255:red, 255; green, 255; blue, 255 }  ,fill opacity=1 ] (230,64.5) .. controls (230,61.46) and (232.46,59) .. (235.5,59) .. controls (238.54,59) and (241,61.46) .. (241,64.5) .. controls (241,67.54) and (238.54,70) .. (235.5,70) .. controls (232.46,70) and (230,67.54) .. (230,64.5) -- cycle ;
\draw  [fill={rgb, 255:red, 255; green, 255; blue, 255 }  ,fill opacity=1 ] (234.5,91.5) .. controls (234.5,88.46) and (236.96,86) .. (240,86) .. controls (243.04,86) and (245.5,88.46) .. (245.5,91.5) .. controls (245.5,94.54) and (243.04,97) .. (240,97) .. controls (236.96,97) and (234.5,94.54) .. (234.5,91.5) -- cycle ;
\draw  [fill={rgb, 255:red, 255; green, 255; blue, 255 }  ,fill opacity=1 ] (230.5,119.5) .. controls (230.5,116.46) and (232.96,114) .. (236,114) .. controls (239.04,114) and (241.5,116.46) .. (241.5,119.5) .. controls (241.5,122.54) and (239.04,125) .. (236,125) .. controls (232.96,125) and (230.5,122.54) .. (230.5,119.5) -- cycle ;
\draw  [fill={rgb, 255:red, 255; green, 255; blue, 255 }  ,fill opacity=1 ] (263,53) .. controls (263,49.96) and (265.46,47.5) .. (268.5,47.5) .. controls (271.54,47.5) and (274,49.96) .. (274,53) .. controls (274,56.04) and (271.54,58.5) .. (268.5,58.5) .. controls (265.46,58.5) and (263,56.04) .. (263,53) -- cycle ;
\draw  [fill={rgb, 255:red, 255; green, 255; blue, 255 }  ,fill opacity=1 ] (269.5,128.5) .. controls (269.5,125.46) and (271.96,123) .. (275,123) .. controls (278.04,123) and (280.5,125.46) .. (280.5,128.5) .. controls (280.5,131.54) and (278.04,134) .. (275,134) .. controls (271.96,134) and (269.5,131.54) .. (269.5,128.5) -- cycle ;
\draw  [color={rgb, 255:red, 155; green, 155; blue, 155 }  ,draw opacity=1 ] (263,34) .. controls (288,23) and (313,129) .. (295,142) .. controls (277,155) and (216,142) .. (219,119) .. controls (222,96) and (247,115) .. (260,97) .. controls (273,79) and (238,45) .. (263,34) -- cycle ;

\draw (218,82.4) node [anchor=north west][inner sep=0.75pt]    {$u$};
\draw (286,119.4) node [anchor=north west][inner sep=0.75pt]    {$v$};
\draw (301,86.4) node [anchor=north west][inner sep=0.75pt]    {$W$};

\end{tikzpicture}
&
\tikzset{every picture/.style={line width=0.75pt}} 

\begin{tikzpicture}[x=0.75pt,y=0.75pt,yscale=-1,xscale=1]

\draw  [dash pattern={on 0.84pt off 2.51pt}]  (275,128.5) -- (313,121) ;
\draw  [dash pattern={on 0.84pt off 2.51pt}]  (275,128.5) -- (313.5,142.5) ;
\draw  [dash pattern={on 0.84pt off 2.51pt}]  (275,128.5) -- (302,160) ;
\draw  [dash pattern={on 0.84pt off 2.51pt}]  (268.5,53) -- (301,41) ;
\draw  [dash pattern={on 0.84pt off 2.51pt}]  (268.5,53) -- (307,67) ;
\draw  [dash pattern={on 4.5pt off 4.5pt}] (140.7,48.1) .. controls (160.7,38.1) and (218.3,37.9) .. (230,50) .. controls (241.7,62.1) and (244.7,113.1) .. (229.7,129.1) .. controls (214.7,145.1) and (146.7,145.1) .. (135.7,129.1) .. controls (124.7,113.1) and (120.7,58.1) .. (140.7,48.1) -- cycle ;
\draw  [color={rgb, 255:red, 255; green, 255; blue, 255 }  ,draw opacity=1 ][fill={rgb, 255:red, 255; green, 255; blue, 255 }  ,fill opacity=1 ] (205,60) -- (275,60) -- (275,123) -- (205,123) -- cycle ;
\draw    (240,91.5) -- (275,128.5) ;
\draw    (240,91.5) -- (236,119.5) ;
\draw    (236,119.5) -- (275,128.5) ;
\draw    (268.5,53) -- (275,128.5) ;
\draw    (235.5,64.5) -- (268.5,53) ;
\draw    (240,91.5) -- (268.5,53) ;
\draw    (235.5,64.5) -- (240,91.5) ;
\draw  [fill={rgb, 255:red, 255; green, 255; blue, 255 }  ,fill opacity=1 ] (230,64.5) .. controls (230,61.46) and (232.46,59) .. (235.5,59) .. controls (238.54,59) and (241,61.46) .. (241,64.5) .. controls (241,67.54) and (238.54,70) .. (235.5,70) .. controls (232.46,70) and (230,67.54) .. (230,64.5) -- cycle ;
\draw  [fill={rgb, 255:red, 255; green, 255; blue, 255 }  ,fill opacity=1 ] (234.5,91.5) .. controls (234.5,88.46) and (236.96,86) .. (240,86) .. controls (243.04,86) and (245.5,88.46) .. (245.5,91.5) .. controls (245.5,94.54) and (243.04,97) .. (240,97) .. controls (236.96,97) and (234.5,94.54) .. (234.5,91.5) -- cycle ;
\draw  [fill={rgb, 255:red, 255; green, 255; blue, 255 }  ,fill opacity=1 ] (230.5,119.5) .. controls (230.5,116.46) and (232.96,114) .. (236,114) .. controls (239.04,114) and (241.5,116.46) .. (241.5,119.5) .. controls (241.5,122.54) and (239.04,125) .. (236,125) .. controls (232.96,125) and (230.5,122.54) .. (230.5,119.5) -- cycle ;
\draw  [fill={rgb, 255:red, 255; green, 255; blue, 255 }  ,fill opacity=1 ] (263,53) .. controls (263,49.96) and (265.46,47.5) .. (268.5,47.5) .. controls (271.54,47.5) and (274,49.96) .. (274,53) .. controls (274,56.04) and (271.54,58.5) .. (268.5,58.5) .. controls (265.46,58.5) and (263,56.04) .. (263,53) -- cycle ;
\draw  [fill={rgb, 255:red, 255; green, 255; blue, 255 }  ,fill opacity=1 ] (269.5,128.5) .. controls (269.5,125.46) and (271.96,123) .. (275,123) .. controls (278.04,123) and (280.5,125.46) .. (280.5,128.5) .. controls (280.5,131.54) and (278.04,134) .. (275,134) .. controls (271.96,134) and (269.5,131.54) .. (269.5,128.5) -- cycle ;
\draw  [color={rgb, 255:red, 155; green, 155; blue, 155 }  ,draw opacity=1 ] (263,34) .. controls (288,23) and (298,135) .. (287,144) .. controls (276,153) and (234,149) .. (224,133) .. controls (214,117) and (204,88) .. (211,80) .. controls (218,72) and (237.95,82.15) .. (246,71) .. controls (254.05,59.85) and (238,45) .. (263,34) -- cycle ;

\draw (218,82.4) node [anchor=north west][inner sep=0.75pt]    {$u$};
\draw (292,84.4) node [anchor=north west][inner sep=0.75pt]    {$W$};
\draw (214,50.4) node [anchor=north west][inner sep=0.75pt]    {$a$};

\end{tikzpicture}
\end{tabular}
\end{center}
\caption{\label{fig:proof-MDS} The two cases of the proof of Theorem~\ref{thm:MDS}: with an inside node, on the left, and without an inside node on the right. The cycle is represented by the dashed line, and the dotted lines represent outgoing edges of non-inside nodes.}
\end{figure}
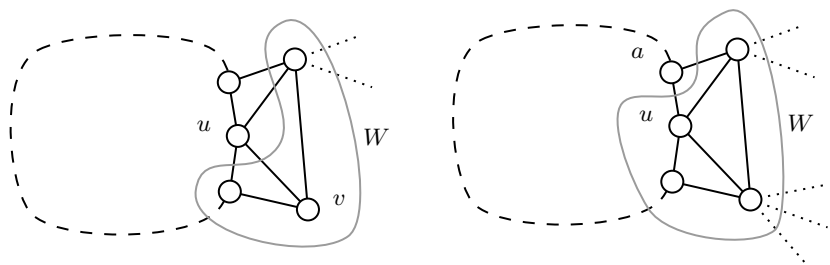


\section{Maximal matching}
\label{sec:MM}

We now turn our attention to the problem of maximal matching, and get the following theorem.

\begin{theorem}\label{thm:max-matching}
The class $\UU^1_{MM}$ is composed of the set of trees, of balanced complete bipartite graphs, and  of cliques with an even number of nodes. 
For any $k\geq 2$, the class $\UU^{k}_{MM}$ is composed of the cycle on four nodes and of the set of trees. 
\end{theorem}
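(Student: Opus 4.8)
The plan is to prove both statements together, starting from a combinatorial reformulation of $1$-robustness and then lifting to $k\geq 2$ via Property~\ref{prop:basic-inclusions}. I would begin with two quick observations. If $G$ is a tree, no edge can be deleted without disconnecting $G$, so $k$-robustness holds vacuously and every tree lies in $\UU^{k}_{MM}$ for all $k$. And for any maximal matching $M$, deleting an edge \emph{not} in $M$ cannot spoil maximality (removing edges only removes potential augmentations), so $M$ is $1$-robust if and only if for every matched edge $(u,v)\in M$ that is not a bridge, neither $u$ nor $v$ has a neighbor left unmatched by $M$; equivalently, $G\in\UU^{1}_{MM}$ iff for every maximal matching $M$ and every $M$-unmatched vertex $u$, every edge of $M$ incident to a neighbor of $u$ is a bridge of $G$.

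Next I would pin down $\UU^{1}_{MM}$. Let $G$ be connected, in $\UU^{1}_{MM}$, and not a tree, so $G$ contains a cycle. First I would show $G$ is $2$-connected: if $G$ had a bridge $b$, then since $G$ is not a tree one of the two sides of $G-b$ contains a cycle, and I would construct a maximal matching of $G$ violating the reformulation above (this is the technical heart, discussed below). Once $2$-connectivity is established, the reformulation forces every maximal matching of $G$ to be perfect: an unmatched vertex would, by maximality, have a matched neighbor, and the matching edge at that neighbor would have to be a bridge, but a $2$-connected graph has none. The connected graphs in which every maximal matching is perfect are exactly the balanced complete bipartite graphs and the even cliques (Sumner's ``randomly matchable'' graphs), which gives ``$\subseteq$''. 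The reverse inclusion is direct: in $K_{n,n}$ and in $K_{2m}$ every maximal matching $M$ is perfect, and deleting one edge $e$ either keeps $M$ perfect (if $e\notin M$) or leaves exactly the two endpoints of $e$ unmatched with no edge between them, since $e$ has just been removed; so $M$ stays maximal. Together with the trees (which include $K_{1,1}=K_2$, and with $K_{2,2}=C_{4}$), this is exactly the announced class.

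For $k\geq 2$, Property~\ref{prop:basic-inclusions} gives $\UU^{k}_{MM}\subseteq\UU^{1}_{MM}$, so any non-tree member is $K_{n,n}$ or $K_{2m}$. If such a graph has a perfect matching of size at least $2$ --- that is, $K_{n,n}$ with $n\geq 3$, or $K_{2m}$ with $m\geq 2$ --- then I would delete two matched edges $(a,b)$ and $(c,d)$: the graph stays connected (these graphs are $3$-edge-connected, hence survive two deletions; for $K_4$ the remainder is the connected graph $C_{4}$), yet $a,b,c,d$ become unmatched and still span an edge (any edge $(a,c)$ in the clique case; an edge between the two opposite-side unmatched vertices in the bipartite case), so the matching is no longer maximal. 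Hence none of these graphs is in $\UU^{2}_{MM}$, and the only non-tree survivor is $K_{2,2}=C_{4}$: deleting two or more edges of $C_{4}$ disconnects it, and deleting a single edge turns it into a path on four vertices in which the remaining matched edge is still maximal, so $C_{4}\in\UU^{k}_{MM}$ for every $k$. This yields $\UU^{k}_{MM}=\{\text{trees}\}\cup\{C_{4}\}$ for all $k\geq 2$.

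The hard part will be the bridge step: showing that a connected non-tree graph with a bridge cannot lie in $\UU^{1}_{MM}$. Given a bridge $b=(x,y)$ whose $x$-side $G_x$ contains a cycle, the plan is to produce a maximal matching of $G$ that, by the reformulation, is not $1$-robust --- typically by working inside a $2$-connected block $B$ of $G_x$ with at least three vertices: one leaves a suitable vertex $z\in B$ unmatched while a neighbor of $z$ is matched by an edge of $B$ (necessarily a non-bridge), and extends this to a maximal matching of all of $G$. The care needed is to choose $z$ so that the branches of $G$ hanging off the cut vertices of $B$ do not force $z$ to be matched (for instance avoiding a vertex carrying a pendant edge), to match each cut vertex of $B$ consistently with its branches so that the global matching is genuinely maximal, and to dispose of the degenerate cases (notably $B$ a triangle, or a block in which every maximal matching is perfect) by instead matching the $x$-endpoint of $b$ into $G_x$ and arguing on the $y$-side. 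This block-and-branch bookkeeping is the delicate core of the argument; carried out carefully it also yields, as a byproduct, a self-contained proof of the $K_{n,n}$/$K_{2m}$ dichotomy in the $2$-connected case, so Sumner's theorem may be cited or re-proved as one prefers.
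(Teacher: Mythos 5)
Most of your proposal tracks the paper's own route: your reformulation of $1$-robustness is exactly the paper's Claim~\ref{clm:u-not-matched} (stated as an equivalence), the $2$-connected case then reduces to Sumner's theorem on graphs whose maximal matchings are all perfect, the reverse inclusion via perfect matchings is the paper's argument, and your two-matched-edge deletion for $k\geq 2$ (with the $C_4$ and connectivity checks) is precisely the proof of Lemma~\ref{lem:max-matching-k-more-than-2}. All of that is correct.

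The genuine gap is the step you yourself call the ``delicate core'': proving that a connected non-tree graph with a bridge is not in $\UU^1_{MM}$. This is the technical heart of the first statement, and you offer a plan rather than a proof; worse, the plan as described stumbles on exactly the obstructions you list. Choosing an unmatched $z$ inside a block $B$ is not just a matter of ``avoiding a vertex carrying a pendant edge'': in a triangle each of whose vertices carries a pendant, \emph{every} vertex of the block is matched in \emph{every} maximal matching, so no admissible $z$ exists and the witness unmatched vertex must live outside $B$; your fallback (``match the $x$-endpoint of $b$ into $G_x$ and argue on the $y$-side'') only visibly works when $y$ is pendant, since otherwise the greedy completion may match $y$ on its own side and no violation of the reformulation appears. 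The paper closes this hole with an idea absent from your sketch: Claim~\ref{clm:cycle}, proved by matching surgery --- assuming the graph is in the class, take any maximal matching leaving a cycle vertex $u$ unmatched, note by Claim~\ref{clm:u-not-matched} that the matched edges at $u$'s neighbors are bridges, unmatch the cycle-neighbor $a$, rematch it along the cycle edge $(a,d)$, and recomplete to get a \emph{second} maximal matching that violates Claim~\ref{clm:u-not-matched}. This exploits the universal quantifier over matchings and sidesteps the forced-matching problem entirely; it is then what makes both cases of Claim~\ref{clm:no-hybrid} go through (in particular guaranteeing that $u$ stays unmatched in the second constructed matching when $y$ is not pendant). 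Until you either supply an argument of this kind or genuinely carry out the block-and-branch bookkeeping in the problematic configurations above, the inclusion of $\UU^1_{MM}$ into $\{\text{trees}\}\cup\{K_{n,n}\}\cup\{K_{2m}\}$ --- and hence also the $k\geq2$ statement, which you derive from it --- is not established.
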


The core of this part is the study of the case where only one edge is removed. At the end of the section we consider the more general technically less interesting case of multiple edges removal.

\subsection{One edge removal}

In this subsection we characterize the class of graphs where every maximal matching is 1-robust. 

\begin{lemma}\label{lem:universel-MM-1}
$\UU^1_{MM}$ is composed of the set of trees, of balanced complete bipartite graphs, and  of cliques with an even number of nodes. 
\end{lemma}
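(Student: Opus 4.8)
The claim is an exact characterization of $\UU^1_{MM}$, so I need to prove two inclusions: (i) every tree, every balanced complete bipartite graph, and every clique on an even number of vertices is in $\UU^1_{MM}$, and (ii) nothing else is. Direction (i) is the easy one and I would dispatch it case by case.

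First I would observe a general principle for maximal matchings: a matching $M$ fails to be maximal in $G$ exactly when there is an edge both of whose endpoints are unmatched by $M$. So a maximal matching $M$ of $G$ is $1$-robust iff for every edge $e$ whose removal keeps $G$ connected, the set $M \setminus \{e\}$ still has no "augmenting edge", i.e. no edge of $G - e$ with both endpoints unmatched by $M \setminus \{e\}$. The only way removing a single edge $e$ can create such a problem is if $e \in M$: then its two endpoints $x,y$ become unmatched, and the danger is an edge $xz$ or $yz$ (or $xy$ itself, but $xy = e$ is gone) with $z$ also unmatched. Note $e=xy \in M$ being removable (non-bridge) means $e$ lies on a cycle. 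So: \emph{$M$ is $1$-robust iff every matched edge $e=xy$ that lies on a cycle has the property that all neighbors of $x$ and of $y$ (other than each other) are matched by $M$.} I would state this as a working lemma.

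Now direction (i). For a \textbf{tree}, no edge lies on a cycle, so every matching is vacuously $1$-robust. For a \textbf{balanced complete bipartite graph} $K_{n,n}$: any maximal matching is a perfect matching (by König/Hall, or directly, since an unmatched vertex on each side would be joinable), so $M \setminus \{e\}$ leaves only the two endpoints of $e$ unmatched, and they are on opposite sides, so the only edge between two unmatched vertices is $e$ itself — already removed. Hence $1$-robust. (This matches the Figure~\ref{fig:mm-b} discussion.) For a \textbf{clique $K_{2m}$}: again every maximal matching is perfect, so after removing a matched edge $e=xy$ the only unmatched vertices are $x$ and $y$, whose only common edge is $e$; so no augmenting edge remains. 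Hence $1$-robust.

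Direction (ii) — \textbf{the main obstacle} — is showing no other graph works. Take $G \in \UU^1_{MM}$ that is not a tree; I must show $G$ is $K_{n,n}$ or $K_{2m}$. Since $G$ is not a tree it contains a cycle; let $C$ be a cycle and think about building a bad maximal matching. The key sub-step is: if $G$ has an edge $e=xy$ on a cycle and $x$ has a neighbor $z \notin \{y\}$, I want to find a maximal matching $M$ containing $e$ but \emph{not} matching $z$ — then removing $e$ leaves $x$ and $z$ both unmatched while the graph stays connected (since $e$ is on a cycle), contradiction. This is possible unless $G$ is very rigid: every maximal matching containing $e$ must also match $z$. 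Pushing this rigidity around all cycle-edges and their neighborhoods should force (a) $G$ has a perfect matching and in fact is "matching-covered" in a strong sense, and (b) the neighborhood structure is so constrained that $G$ is complete or complete bipartite. I would argue: first, $G$ must have no odd structure allowing a near-perfect maximal matching with two free adjacent vertices, which forces every maximal matching to be perfect; then, given that, if two non-adjacent vertices $u,v$ on the same "side" have a common neighbor but are not forced together in every perfect matching, one builds a perfect matching pairing them with others and a cycle through $u$ to get the contradiction — leaving only $K_{n,n}$ and $K_{2m}$ (distinguishing the two by whether $G$ is bipartite). The delicate part is the careful case analysis ensuring the maximal matching can always be completed to leave the desired pair of adjacent vertices unmatched while avoiding disconnection; I expect this extremal argument, rather than any single clever trick, to be where the real work lies.
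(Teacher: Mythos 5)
Your first inclusion and your ``working lemma'' are correct and match the paper: the working lemma is just the contrapositive of the paper's Claim~\ref{clm:u-not-matched} (an unmatched vertex forces all its neighbours to be matched along bridges), and the easy direction is handled the same way (trees vacuously; in $K_{n,n}$ and $K_{2m}$ every maximal matching is perfect, and perfect matchings are 1-robust). But the converse direction is where the content of the lemma lies, and there your text is a plan rather than a proof --- you say so yourself. Concretely, the first missing piece is the claim that in a non-tree graph of $\UU^1_{MM}$ every maximal matching is perfect. Your proposed sub-step, ``find a maximal matching containing $e=xy$ that leaves a prescribed neighbour $z$ of $x$ unmatched,'' is not something one can do in general: it already fails in $K_4$ and in $K_{n,n}$, which \emph{are} in the class, so characterizing exactly when this rigidity occurs is the whole problem, and that case analysis is precisely what you defer. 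The paper carries it out through a chain of claims: unmatched vertices force bridge-matched neighbours (Claim~\ref{clm:u-not-matched}), every vertex lying on a cycle is matched in \emph{every} maximal matching of a graph in the class (Claim~\ref{clm:cycle}, proved by a genuinely non-obvious matching surgery producing a second, non-robust matching $M'$), and a graph in the class is either a tree or 2-connected (Claim~\ref{clm:no-hybrid}). This last dichotomy is unavoidable because robustness only concerns removals that preserve connectivity, so graphs mixing cycles with bridges and pendant parts need their own argument; your sketch does not engage with bridges at all beyond noting that removable edges lie on cycles.

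The second missing piece is the final step from ``every maximal matching of $G$ is perfect'' to ``$G$ is a balanced complete bipartite graph or an even clique.'' This is itself a theorem (Sumner's characterization of randomly matchable graphs, quoted as Theorem~\ref{thm:summer79} from \cite{Summer79}), and your one-sentence sketch --- pairing two non-adjacent vertices ``with others'' and invoking ``a cycle through $u$'' --- does not prove it. Either cite that result, as the paper does, or give a full proof of it. As written, both halves of the hard direction remain open, so the proposal cannot be accepted as a proof of the lemma, even though its overall strategy (reduce to perfect matchings, then classify) is the same as the paper's.
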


The rest of this subsection is devoted to the proof of Lemma~\ref{lem:universel-MM-1}.

\paragraph*{A result about perfect matchings}

The core of the proof is to show a connection to perfect matchings. Once this is done, we can use the following theorem from~\cite{Summer79}.

\begin{theorem}[\cite{Summer79}]\label{thm:summer79}
The class of graphs such that any maximal matching is perfect is the union of the balanced complete bipartite graphs and of the cliques of even size.
\end{theorem}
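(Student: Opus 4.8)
The plan is to prove the equivalent and standard formulation: a connected graph $G$ is \emph{randomly matchable} (every maximal matching is perfect) if and only if $G$ is a balanced complete bipartite graph $K_{n,n}$ or an even clique $K_{2m}$. The easy (``if'') direction is a short maximality argument. In $K_{2m}$ any two exposed vertices would be adjacent, so at most one vertex is exposed, and since a matching covers an even number of vertices, exactly zero are exposed. In $K_{n,n}$ a matching saturates the same number of vertices on each side, so exposed vertices occur on both sides simultaneously or not at all; two exposed vertices on opposite sides would again be adjacent, so none exist. Since a randomly matchable graph has a perfect matching it has even order, and a disconnected such graph is a disjoint union of connected randomly matchable graphs, so it suffices to treat the connected case by induction on $|V(G)|=2n$.

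The engine of the induction is the following reduction lemma: if $G$ is randomly matchable and $uv\in E(G)$, then $G-\{u,v\}$ is randomly matchable. Indeed, given any maximal matching $M'$ of $G-\{u,v\}$, the set $M'\cup\{uv\}$ is a matching of $G$, and it is maximal because any augmenting edge would have both endpoints exposed and lie in $G-\{u,v\}$, contradicting maximality of $M'$; since $G$ is randomly matchable, $M'\cup\{uv\}$ is perfect, hence $M'$ is perfect. Each connected component of $G-\{u,v\}$ is therefore randomly matchable of strictly smaller order, so by the induction hypothesis it is some $K_{2m_i}$ or $K_{n_i,n_i}$.

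For the inductive step I would fix an edge $uv$, set $H:=G-\{u,v\}$, and reconstruct $G$ from the (complete or complete-bipartite) components of $H$ together with the way $u$ and $v$ attach to them. The governing principle is: \emph{whenever an edge is missing, exhibit a maximal non-perfect matching}. The cleanest illustrative case is when $H$ is connected and complete, $H=K_{2n-2}$: if $u$ were non-adjacent to some $w\in H$, then matching $v$ to a vertex $z\neq w$ and matching $H\setminus\{z,w\}$ perfectly (possible as $H$ is a complete graph of even residual order) leaves exactly $u$ and $w$ exposed; since $uw\notin E$ and every other neighbour of $u$ and of $w$ is matched, this matching is maximal yet not perfect, a contradiction. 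Hence $u$ and $v$ are adjacent to everything and $G=K_{2n}$. The complementary cases---$H$ connected and complete bipartite (forcing $G=K_{n,n}$, where one must additionally rule out any edge that would make $G$ non-bipartite), and the bipartite versus non-bipartite split of $G$ itself---are handled by the same ``expose a bad pair'' technique, seeding the reduction with a triangle edge in the non-bipartite case.

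The main obstacle is exactly this reconstruction step, and it has two delicate points. First, $H=G-\{u,v\}$ may be disconnected, so one must argue that $u,v$ glue the several complete/complete-bipartite blocks into a single $K_{2n}$ or $K_{n,n}$ and cannot produce any intermediate graph; the way to control this is to show that a connected randomly matchable graph has no cut vertex, so that a suitable edge $uv$ keeps $H$ connected, or to analyse the blocks directly via exposed-pair matchings. Second, in the bipartite case one must ensure that the pair left exposed is genuinely non-adjacent and that the rest can be matched perfectly; choosing the deleted edge on a shortest path between two non-adjacent vertices and invoking the reduction lemma keeps this bookkeeping finite. Everything else reduces to routine maximality checking.
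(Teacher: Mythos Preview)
The paper does not prove this theorem at all: it is quoted from \cite{Summer79} and used as a black box in the proof of Lemma~\ref{lem:universel-MM-1}. So there is no ``paper's own proof'' to compare against, and any correct argument you supply is a genuine addition rather than a duplication.

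On the merits of your sketch: the ``if'' direction and the reduction lemma (that $G-\{u,v\}$ is randomly matchable for every edge $uv$) are both correct and are indeed the standard engine of Sumner's proof. Your illustrative case $H=K_{2n-2}$ is also handled correctly. However, what you have written is explicitly a plan rather than a proof: you yourself flag the reconstruction step as the ``main obstacle'' with ``two delicate points'', and you do not actually carry out the case where $H$ is disconnected, nor the bipartite reconstruction, nor the argument that a connected randomly matchable graph has no cut vertex. These are exactly the places where the work lies. For instance, when $H$ is a disjoint union of several $K_{2m_i}$'s and $K_{n_i,n_i}$'s, you need a concrete argument that $u$ and $v$ are each adjacent to \emph{every} vertex of \emph{every} component (in the clique case) or to every vertex of the appropriate side (in the bipartite case), and that mixed components cannot coexist; the ``expose a bad pair'' device works, but it has to be executed, not just invoked. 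As written, the proposal is a sound outline with the hard case analysis still to be filled in.
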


\paragraph*{First inclusion}

We start with the easy direction of the theorem, which is to prove that the graphs we mentioned are  in $\UU^1_{MM}$. 
In trees, any property is robust, since no edge can be removed without disconnecting the graph.
For the two other types, we will use the following claim.

\begin{claim}
Perfect matchings are 1-robust maximal matchings. 
\end{claim}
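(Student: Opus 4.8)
The plan is a short case analysis on the removed edge. Let $G$ be a graph equipped with a perfect matching $M$. First observe that $M$ is already a maximal matching of $G$: since every vertex is matched, no edge can be added. Now fix any edge $e$ whose removal keeps $G$ connected (in fact connectivity will play no role here), and let $M' = M \setminus \{e\}$ be what remains of $M$ in $G - e$. We must check that $M'$ is a maximal matching of $G-e$.

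If $e \notin M$, then $M' = M$ is still a matching of $G - e$ and it still saturates every vertex, so it is a (perfect, hence maximal) matching of $G - e$. If $e = (u,v) \in M$, then in $M'$ exactly the two vertices $u$ and $v$ are left unmatched, while every other vertex is still saturated by $M'$. Consequently, the only edge one could possibly add to $M'$ is an edge joining two unmatched vertices, i.e. an edge between $u$ and $v$; the sole such edge is $(u,v) = e$, which has been removed. Hence no edge can be added to $M'$ in $G - e$, so $M'$ is maximal, which is exactly $1$-robustness of $M$.

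There is essentially no obstacle here: the only point requiring a little care is to track precisely which vertices become unmatched after deleting a matched edge (only its two endpoints) and to use the definition of maximal matching as ``no edge between two currently unmatched vertices.'' This claim is what lets us invoke Theorem~\ref{thm:summer79}: in a balanced complete bipartite graph or an even clique every maximal matching is perfect, hence by the claim every maximal matching is $1$-robust, giving the desired inclusion in $\UU^1_{MM}$.
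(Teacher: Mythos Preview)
Your proof is correct and follows essentially the same case analysis as the paper: split on whether the removed edge lies in the matching, and in the second case observe that only the two endpoints become unmatched and the unique edge between them has been deleted. The paper phrases the second case as ``all their neighbours are matched,'' which is the same observation from the other side.
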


Consider a perfect matching in a graph, and remove an arbitrary edge (that does not disconnect the graph). 
If this edge was not in the matching, and then we still have a perfect matching, thus a maximal matching. If this edge was in the matching, then there are only two non-matched nodes in the graph (the ones that were adjacent to the edge), and all their neighbours are matched, thus the matching is still maximal. This proves the claim.\claimend{}

In balanced complete bipartite graphs and cliques of even size, any maximal matching is perfect (Theorem~\ref{thm:summer79}), and since perfect matchings are 1-robust maximal matchings, we get the first direction of Lemma~\ref{lem:universel-MM-1}.

\paragraph*{Second inclusion: three useful claims}
We now tackle the other direction. The following lemma establishes a local condition that 1-robust matchings must satisfy. See Figure~\ref{fig:u-not-matched} for an illustration.

\begin{claim}\label{clm:u-not-matched}
In a 1-robust maximal matching $M$, if a node $u$ is not matched, then all the nodes of $N(u)$ are matched, and their matched edges are bridges of the graph.
\end{claim}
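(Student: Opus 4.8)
The plan is to argue by contradiction in two steps. Let $M$ be a $1$-robust maximal matching and let $u$ be an unmatched node. First I would show that every neighbor of $u$ is matched: if some $v \in N(u)$ were unmatched, then the edge $(u,v)$ could be added to $M$, contradicting maximality of $M$ already in $G$ itself (no edge removal needed). So every $v \in N(u)$ is matched, say by an edge $e_v = (v, w_v) \in M$.

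Next I would show each such $e_v$ is a bridge. Suppose not: $e_v$ is not a bridge, so $G - e_v$ is still connected. Apply $1$-robustness with the single removed edge $e_v$: the set $M \setminus \{e_v\}$ must be a maximal matching of $G - e_v$. But in $M \setminus \{e_v\}$ both $u$ and $v$ are unmatched, and the edge $(u,v)$ still exists in $G - e_v$ (since $(u,v) \neq e_v$, because $u$ is unmatched in $M$ while $e_v \in M$). Hence $(u,v)$ can be added, so $M \setminus \{e_v\}$ is not maximal in $G - e_v$ — contradicting $1$-robustness. Therefore $e_v$ is a bridge.

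The argument is essentially immediate once one unpacks the definition of $1$-robustness for matchings given in the ``case of robust maximal matching'' paragraph, so I do not expect a genuine obstacle here; the only point requiring a moment of care is verifying that $(u,v) \neq e_v$ so that the edge $(u,v)$ survives the removal of $e_v$, which holds because $u \notin V(M)$ while $e_v \in M$. One should also note that the two conclusions are not independent: the bridge conclusion is exactly what forces the later global structural analysis, since a node $u$ being unmatched pins down bridges all around it, and in a graph with few bridges this severely constrains where unmatched nodes can sit — which is presumably how the characterization (trees, balanced complete bipartite graphs, even cliques) will eventually be extracted, via the connection to perfect matchings and Theorem~\ref{thm:summer79}.
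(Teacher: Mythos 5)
Your argument is correct and is essentially identical to the paper's proof: maximality of $M$ forces every neighbor of $u$ to be matched, and removing a non-bridge matched edge $(v,w)$ with $v\in N(u)$ leaves $u$ and $v$ both unmatched with $(u,v)$ still present, contradicting 1-robustness. The extra check that $(u,v)\neq e_v$ is a fine (if implicit in the paper) detail; no gap here.
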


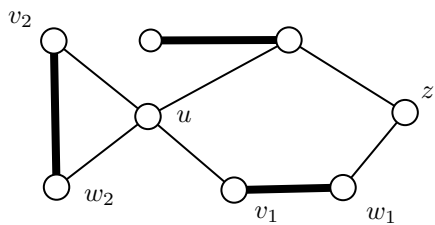
\begin{figure}[!h]
\begin{center}
\tikzset{every picture/.style={line width=0.75pt}} 

\begin{tikzpicture}[x=0.75pt,y=0.75pt,yscale=-1,xscale=1]

\draw    (194.35,183.35) -- (151.35,146.35) ;
\draw    (221.7,107.85) -- (151.35,146.35) ;
\draw [line width=3]    (248.7,182.1) -- (194.35,183.35) ;
\draw    (279.7,144.35) -- (248.7,182.1) ;
\draw    (221.7,107.85) -- (279.7,144.35) ;
\draw [line width=3]    (152.7,108.1) -- (221.7,107.85) ;
\draw    (105.7,181.87) -- (151.35,146.35) ;
\draw    (104.35,108.35) -- (151.35,146.35) ;
\draw [line width=3]    (104.35,108.35) -- (105.7,181.87) ;
\draw  [fill={rgb, 255:red, 255; green, 255; blue, 255 }  ,fill opacity=1 ] (98,108.35) .. controls (98,104.84) and (100.84,102) .. (104.35,102) .. controls (107.86,102) and (110.7,104.84) .. (110.7,108.35) .. controls (110.7,111.86) and (107.86,114.7) .. (104.35,114.7) .. controls (100.84,114.7) and (98,111.86) .. (98,108.35) -- cycle ;
\draw  [fill={rgb, 255:red, 255; green, 255; blue, 255 }  ,fill opacity=1 ] (145,146.35) .. controls (145,142.84) and (147.84,140) .. (151.35,140) .. controls (154.86,140) and (157.7,142.84) .. (157.7,146.35) .. controls (157.7,149.86) and (154.86,152.7) .. (151.35,152.7) .. controls (147.84,152.7) and (145,149.86) .. (145,146.35) -- cycle ;
\draw  [fill={rgb, 255:red, 255; green, 255; blue, 255 }  ,fill opacity=1 ] (99.35,181.87) .. controls (99.35,178.36) and (102.19,175.52) .. (105.7,175.52) .. controls (109.21,175.52) and (112.05,178.36) .. (112.05,181.87) .. controls (112.05,185.37) and (109.21,188.22) .. (105.7,188.22) .. controls (102.19,188.22) and (99.35,185.37) .. (99.35,181.87) -- cycle ;
\draw  [fill={rgb, 255:red, 255; green, 255; blue, 255 }  ,fill opacity=1 ] (215.35,107.85) .. controls (215.35,104.34) and (218.19,101.5) .. (221.7,101.5) .. controls (225.21,101.5) and (228.05,104.34) .. (228.05,107.85) .. controls (228.05,111.36) and (225.21,114.2) .. (221.7,114.2) .. controls (218.19,114.2) and (215.35,111.36) .. (215.35,107.85) -- cycle ;
\draw  [fill={rgb, 255:red, 255; green, 255; blue, 255 }  ,fill opacity=1 ] (188,183.35) .. controls (188,179.84) and (190.84,177) .. (194.35,177) .. controls (197.86,177) and (200.7,179.84) .. (200.7,183.35) .. controls (200.7,186.86) and (197.86,189.7) .. (194.35,189.7) .. controls (190.84,189.7) and (188,186.86) .. (188,183.35) -- cycle ;
\draw  [fill={rgb, 255:red, 255; green, 255; blue, 255 }  ,fill opacity=1 ] (242.35,182.1) .. controls (242.35,178.59) and (245.19,175.75) .. (248.7,175.75) .. controls (252.21,175.75) and (255.05,178.59) .. (255.05,182.1) .. controls (255.05,185.61) and (252.21,188.45) .. (248.7,188.45) .. controls (245.19,188.45) and (242.35,185.61) .. (242.35,182.1) -- cycle ;
\draw  [fill={rgb, 255:red, 255; green, 255; blue, 255 }  ,fill opacity=1 ] (273.35,144.35) .. controls (273.35,140.84) and (276.19,138) .. (279.7,138) .. controls (283.21,138) and (286.05,140.84) .. (286.05,144.35) .. controls (286.05,147.86) and (283.21,150.7) .. (279.7,150.7) .. controls (276.19,150.7) and (273.35,147.86) .. (273.35,144.35) -- cycle ;
\draw  [fill={rgb, 255:red, 255; green, 255; blue, 255 }  ,fill opacity=1 ] (147.2,108.1) .. controls (147.2,105.06) and (149.66,102.6) .. (152.7,102.6) .. controls (155.74,102.6) and (158.2,105.06) .. (158.2,108.1) .. controls (158.2,111.14) and (155.74,113.6) .. (152.7,113.6) .. controls (149.66,113.6) and (147.2,111.14) .. (147.2,108.1) -- cycle ;

\draw (164,142) node [anchor=north west][inner sep=0.75pt]    {$u$};
\draw (203,190) node [anchor=north west][inner sep=0.75pt]    {$v_{1}$};
\draw (80,92) node [anchor=north west][inner sep=0.75pt]    {$v_{2}$};
\draw (118,181.4) node [anchor=north west][inner sep=0.75pt]    {$w_{2}$};
\draw (258.7,190.85) node [anchor=north west][inner sep=0.75pt]    {$w_{1}$};
\draw (286,130) node [anchor=north west][inner sep=0.75pt]    {$z$};

\end{tikzpicture}
\end{center}
\caption{Illustration of Claim~\ref{clm:u-not-matched}. Here we have a maximal matching, and in particular all the neighbors of $u$ are matched, but it is not a 1-robust matching. 
Indeed, removing $(v_1,w_1)$ gives the possibility of adding $(u,v_1)$ and $(w_1,z)$. Also, having a triangle with a matched edge and an unmatched node, like $(u,v2,w_2)$ is impossible (Claim~\ref{clm:triangle}), since removing $(v_2,w_2)$ gives the possibility of adding either $(u,v_2)$ or $(u,w_2)$ to the matching, contradicting the maximality. Hence we need the bridge condition.}
\label{fig:u-not-matched}
\end{figure}

The fact that all the nodes in $N(u)$ are matched follows from $M$ being a maximal matching.  
Now, suppose that there exists $(v,w)\in M$, such that $v\in N(u)$ and $(v,w)$ is not a bridge. 
In other words, the removal of $(v,w)$ does not disconnect the graph. 
After this removal, both $u$ and $v$ are unmatched, and since $(u,v)$ is an edge of the graph, the matching in the new graph cannot be maximal. This contradicts the 1-robustness of $M$, and proves the lemma. 
\claimend

The following claim follows directly from Claim~\ref{clm:u-not-matched}.

\begin{claim}\label{clm:triangle}
In a 1-robust maximal matching $M$, if there is an unmatched node $u$, two nodes $a,b\in N(u)$ with $(a,b)\in E$, then $(a,b)\notin M$.
\end{claim}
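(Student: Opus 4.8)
The plan is to derive this immediately from Claim~\ref{clm:u-not-matched}, which already does all the work. I would argue by contradiction: assume $u$ is unmatched, $a,b\in N(u)$, $(a,b)\in E$, and $(a,b)\in M$. Applying Claim~\ref{clm:u-not-matched} to the unmatched node $u$ and its neighbor $a$, we learn that $a$ is matched and that the edge of $M$ incident to $a$ is a bridge of $G$. But by assumption that edge is precisely $(a,b)$, so $(a,b)$ must be a bridge.

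The second step is to contradict this by producing a path between the endpoints of $(a,b)$ that avoids the edge itself. Since $a,b\in N(u)$, the length-two path $a\,u\,b$ does exactly that, so the removal of $(a,b)$ keeps $G$ connected — contradicting the fact that $(a,b)$ is a bridge. Hence $(a,b)\notin M$, as claimed.

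I do not expect any genuine obstacle: the statement is a one-line corollary of the previous claim, and the only point requiring a moment's care is observing that the ``matched edge of $a$'' in Claim~\ref{clm:u-not-matched} coincides with $(a,b)$ under our assumption. (Equivalently, one could run the direct argument sketched in Figure~\ref{fig:u-not-matched}: the triangle $u,a,b$ shows that deleting $(a,b)$ does not disconnect $G$, and after the deletion $u,a,b$ are all unmatched while $u$ is adjacent to $a$ and $b$, so what remains of $M$ is not maximal, contradicting $1$-robustness.)
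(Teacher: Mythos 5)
Your proof is correct and follows the paper's route exactly: the paper also derives this claim directly from Claim~\ref{clm:u-not-matched}, observing that if $(a,b)$ were in $M$ it would have to be a bridge, which the triangle $u,a,b$ rules out. Nothing is missing.
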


We now study the shape of 1-robust maximal matchings in cycles. 

\begin{claim}\label{clm:cycle}
In every maximal matching of a graph in $\mathcal{U}^1_{MM}$, if a node belongs to a cycle, then it is matched. 
\end{claim}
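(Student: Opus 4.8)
The plan is to argue by contradiction: assume $G\in\UU^1_{MM}$ admits a maximal matching $M$ that leaves some vertex $u$ lying on a cycle unmatched. First I would replace the given cycle through $u$ by a shortest one, call it $C$, which is then chordless. Fix a neighbour $a$ of $u$ on $C$. Since $M$ is $1$-robust and $u$ is unmatched, Claim~\ref{clm:u-not-matched} gives that $a$ is matched and its matching edge $f=(a,a^{*})$ is a bridge; because $f$ is a bridge, $a^{*}\notin V(C)$, so deleting $f$ splits $G$ into the side $G_a$ containing $a$ — hence all of $C$, in particular $u$ — and the side $G^{*}$ containing $a^{*}$, the only edge between the two sides being $f$.

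The key move is to consider $M'=(M\setminus\{f\})\cup\{(u,a)\}$: now $u$ becomes matched, $a^{*}$ becomes unmatched, and nothing else changes; crucially, the new matching edge $(u,a)$ lies on $C$, so it is \emph{not} a bridge. If $M'$ happens to be maximal we are done, because $M'$ is then a maximal matching of $G$ in which the unmatched vertex $a^{*}$ has a neighbour ($a$) whose matching edge is not a bridge — exactly the configuration forbidden by Claim~\ref{clm:u-not-matched} for $1$-robust maximal matchings — so $M'$ is not $1$-robust, contradicting $G\in\UU^1_{MM}$. One checks that $M'$ fails to be maximal only when $a^{*}$ has a neighbour $a'$ that was already unmatched in $M$, and then necessarily $a'\in G^{*}\setminus\{a^{*}\}$ and $a\not\sim a'$.

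In that remaining case, $u-a-a^{*}-a'$ is an $M$-augmenting path, and augmenting along it yields a maximal matching $M''=(M\setminus\{f\})\cup\{(u,a),(a^{*},a')\}$ of $G$ of size $|M|+1$ that still contains the non-bridge edge $(u,a)$; if $a$ has \emph{any} unmatched-in-$M$ neighbour other than $u$, that vertex stays unmatched in $M''$ and $M''$ again violates Claim~\ref{clm:u-not-matched} at $a$, so we are done. I expect the real work — the main obstacle — to be precisely the sub-case where $a$'s only unmatched neighbour is $u$ while $a^{*}$ nevertheless has an unmatched neighbour: here a single modification no longer suffices, and one must iterate, for instance by re-running the construction inside the strictly smaller graph $G_a$ (in which $u$ still sits on the chordless cycle $C$), or by choosing the other neighbour of $u$ on $C$, or by repeatedly augmenting while controlling a monotone quantity such as the matching size. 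The delicate point is to ensure that this process cannot quietly terminate at a ``good'' maximal matching without having exhibited a non-$1$-robust one along the way, and this is exactly where both hypotheses — that $u$ lies on a cycle, and that \emph{every} maximal matching of $G$ (not merely $M$) is $1$-robust — must be used together.
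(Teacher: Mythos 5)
Your argument is incomplete, and the gap is exactly where you flag it. Up to the augmenting step everything checks out: $M'=(M\setminus\{f\})\cup\{(u,a)\}$ either is maximal (and then the non-bridge matched edge $(u,a)$ next to the unmatched $a^{*}$ contradicts Claim~\ref{clm:u-not-matched}), or you can augment along $u$--$a$--$a^{*}$--$a'$ to a maximal matching $M''$. But in the remaining sub-case ($a$'s only unmatched neighbour in $M$ is $u$, while $a^{*}$ has an unmatched neighbour $a'$), $M''$ may have \emph{no} unmatched vertex adjacent to $u$ or $a$ at all, so nothing forbids $M''$ from being $1$-robust; moreover $u$ is now matched, so the hypothesis you would like to iterate on (an unmatched vertex on a cycle) is gone, and none of the proposed continuations (recursing in $G_a$, switching to the other cycle-neighbour of $u$, repeatedly augmenting) comes with a termination or invariant argument guaranteeing that a non-robust maximal matching is ever exhibited. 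As stated, the proof does not close.

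The paper's proof avoids this trap by never matching $u$: the modification is designed so that $u$ \emph{stays} unmatched while one of its neighbours becomes matched along a cycle edge. Concretely, with $a,b$ the cycle-neighbours of $u$, $(c_i)_i$ its other neighbours, Claim~\ref{clm:u-not-matched} applied to $M$ gives bridges $(a,a')$, $(b,b')$, $(c_i,c_i')$, all with distinct endpoints. Let $d$ be the neighbour of $a$ on the cycle other than $u$. One removes $(a,a')$ and the edge matching $d$ (if any) from the matching, adds the cycle edge $(a,d)$, and completes greedily; since every neighbour of $u$ is matched at that point, $u$ remains unmatched in the resulting maximal matching, whose edge $(a,d)$ is not a bridge --- directly contradicting Claim~\ref{clm:u-not-matched} and hence membership in $\UU^1_{MM}$. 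The lesson is that the ``witness'' unmatched vertex should be kept fixed at $u$ (where the cycle guarantees a non-bridge edge is available for rematching $a$), rather than transferred to $a^{*}$, whose neighbourhood you do not control.
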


Our proof of Claim~\ref{clm:cycle} consists in proving that if a maximal matching does not satisfy the condition, then either it is not 1-robust, or we can use it to build another maximal matching that is not 1-robust. In both cases this means the graph was not in $\mathcal{U}^1_{MM}$.

Consider a node $u$ in a cycle. Let $a$ and $b$ be its direct neighbors in the cycle, and let its other neighbors be $(c_i)_i$. There can be several configurations, with $a$ adjacent to $b$ or not, etc. The proof is generic to all these cases, but Figure~\ref{fig:cycle} illustrates different cases.  
Consider a 1-robust maximal matching $M$ where $u$ is unmatched.
Because of Claim~\ref{clm:u-not-matched}, we know that there exists nodes $a'$, $b'$, and $c_i'$ for all $i$, such that respectively $(a,a')$, $(b,b')$ and $(c_i,c_i')$ (for all $i$) are bridges of the graph. Because of the bridge condition, these nodes $a'$, $b'$ and $c_i'$ (for all $i$) are all different, and are different from $a$, $b$, $u$ and the $c_i$'s.
Let us also denote $d$ the neighbor of $a$ in the cycle that is not $u$. 
Note that $d$ can be a $c_i$ or $b$, but no other named node. (See Figure~\ref{fig:cycle} for an illustration.)
Now we create a new matching $M'$ from $M$ in the following way. 
First remove all the edge of the matching that are not adjacent to one of the nodes above. 
Then, remove $(a,a')$ and any edge matching $d$ (if it exists). Note that this last edge matching $d$ could be a $(c_j,c_j')$ or $(b,b')$.
Add $(a,d)$ to the matching (note that both nodes are unmatched before this operation). In this matching, all the neighbors of $u$ are matched. We complete this matching into a maximal matching $M'$. The edge $(a,d)$ is in $M'$ and $u$ is unmatched, which is a contradiction with Claim~\ref{clm:u-not-matched}, thus $M'$ cannot be 1-robust, and this proves the claim. \claimend

\begin{figure}[!h]
\begin{center}
\begin{tabular}{ccc}
\scalebox{0.8}{
\tikzset{every picture/.style={line width=0.75pt}} 

\begin{tikzpicture}[x=0.75pt,y=0.75pt,yscale=-1,xscale=1]

\draw [line width=3]    (283.35,126.35) -- (234.35,127.35) ;
\draw [line width=3]    (282.35,84.35) -- (233.35,85.35) ;
\draw    (233.35,85.35) -- (184.35,86.35) ;
\draw    (234.35,127.35) -- (184.35,86.35) ;
\draw [line width=3]    (247.35,173.35) -- (198.35,174.35) ;
\draw  [dash pattern={on 4.5pt off 4.5pt}]  (199.35,174.35) .. controls (222.7,169.1) and (234.7,147.1) .. (234.35,127.35) ;
\draw  [dash pattern={on 4.5pt off 4.5pt}]  (134.35,173.35) .. controls (152.7,186.1) and (180.7,193.1) .. (199.35,174.35) ;
\draw    (184.35,86.35) -- (199.35,174.35) ;
\draw    (134.35,173.35) -- (133.35,128.35) ;
\draw    (184.35,86.35) -- (133.35,128.35) ;
\draw [line width=2.25]    (133.35,128.35) -- (84.35,128.35) ;
\draw  [fill={rgb, 255:red, 255; green, 255; blue, 255 }  ,fill opacity=1 ] (178,86.35) .. controls (178,82.84) and (180.84,80) .. (184.35,80) .. controls (187.86,80) and (190.7,82.84) .. (190.7,86.35) .. controls (190.7,89.86) and (187.86,92.7) .. (184.35,92.7) .. controls (180.84,92.7) and (178,89.86) .. (178,86.35) -- cycle ;
\draw  [fill={rgb, 255:red, 255; green, 255; blue, 255 }  ,fill opacity=1 ] (127,128.35) .. controls (127,124.84) and (129.84,122) .. (133.35,122) .. controls (136.86,122) and (139.7,124.84) .. (139.7,128.35) .. controls (139.7,131.86) and (136.86,134.7) .. (133.35,134.7) .. controls (129.84,134.7) and (127,131.86) .. (127,128.35) -- cycle ;
\draw  [fill={rgb, 255:red, 255; green, 255; blue, 255 }  ,fill opacity=1 ] (228,127.35) .. controls (228,123.84) and (230.84,121) .. (234.35,121) .. controls (237.86,121) and (240.7,123.84) .. (240.7,127.35) .. controls (240.7,130.86) and (237.86,133.7) .. (234.35,133.7) .. controls (230.84,133.7) and (228,130.86) .. (228,127.35) -- cycle ;
\draw  [fill={rgb, 255:red, 255; green, 255; blue, 255 }  ,fill opacity=1 ] (128,173.35) .. controls (128,169.84) and (130.84,167) .. (134.35,167) .. controls (137.86,167) and (140.7,169.84) .. (140.7,173.35) .. controls (140.7,176.86) and (137.86,179.7) .. (134.35,179.7) .. controls (130.84,179.7) and (128,176.86) .. (128,173.35) -- cycle ;
\draw  [fill={rgb, 255:red, 255; green, 255; blue, 255 }  ,fill opacity=1 ] (193,174.35) .. controls (193,170.84) and (195.84,168) .. (199.35,168) .. controls (202.86,168) and (205.7,170.84) .. (205.7,174.35) .. controls (205.7,177.86) and (202.86,180.7) .. (199.35,180.7) .. controls (195.84,180.7) and (193,177.86) .. (193,174.35) -- cycle ;
\draw  [fill={rgb, 255:red, 255; green, 255; blue, 255 }  ,fill opacity=1 ] (227,85.35) .. controls (227,81.84) and (229.84,79) .. (233.35,79) .. controls (236.86,79) and (239.7,81.84) .. (239.7,85.35) .. controls (239.7,88.86) and (236.86,91.7) .. (233.35,91.7) .. controls (229.84,91.7) and (227,88.86) .. (227,85.35) -- cycle ;
\draw  [fill={rgb, 255:red, 255; green, 255; blue, 255 }  ,fill opacity=1 ] (277,126.35) .. controls (277,122.84) and (279.84,120) .. (283.35,120) .. controls (286.86,120) and (289.7,122.84) .. (289.7,126.35) .. controls (289.7,129.86) and (286.86,132.7) .. (283.35,132.7) .. controls (279.84,132.7) and (277,129.86) .. (277,126.35) -- cycle ;
\draw  [fill={rgb, 255:red, 255; green, 255; blue, 255 }  ,fill opacity=1 ] (241,173.35) .. controls (241,169.84) and (243.84,167) .. (247.35,167) .. controls (250.86,167) and (253.7,169.84) .. (253.7,173.35) .. controls (253.7,176.86) and (250.86,179.7) .. (247.35,179.7) .. controls (243.84,179.7) and (241,176.86) .. (241,173.35) -- cycle ;
\draw  [fill={rgb, 255:red, 255; green, 255; blue, 255 }  ,fill opacity=1 ] (78,128.35) .. controls (78,124.84) and (80.84,122) .. (84.35,122) .. controls (87.86,122) and (90.7,124.84) .. (90.7,128.35) .. controls (90.7,131.86) and (87.86,134.7) .. (84.35,134.7) .. controls (80.84,134.7) and (78,131.86) .. (78,128.35) -- cycle ;
\draw  [fill={rgb, 255:red, 255; green, 255; blue, 255 }  ,fill opacity=1 ] (276,84.35) .. controls (276,80.84) and (278.84,78) .. (282.35,78) .. controls (285.86,78) and (288.7,80.84) .. (288.7,84.35) .. controls (288.7,87.86) and (285.86,90.7) .. (282.35,90.7) .. controls (278.84,90.7) and (276,87.86) .. (276,84.35) -- cycle ;
\draw [line width=1.5]  [dash pattern={on 1.69pt off 2.76pt}]  (117.7,117.1) -- (98.7,141.1) ;
\draw [line width=1.5]  [dash pattern={on 1.69pt off 2.76pt}]  (141.7,137.1) -- (142.7,164.1) ;

\draw (165,72) node [anchor=north west][inner sep=0.75pt]    {$u$};
\draw (144,122) node [anchor=north west][inner sep=0.75pt]    {$a$};
\draw (144,165) node [anchor=north west][inner sep=0.75pt]    {$d$};
\draw (213,119) node [anchor=north west][inner sep=0.75pt]    {$b$};
\draw (200.35,183) node [anchor=north west][inner sep=0.75pt]    {$c_{1}$};
\draw (248.35,181.1) node [anchor=north west][inner sep=0.75pt]    {$c_{1} '$};
\draw (227.35,65) node [anchor=north west][inner sep=0.75pt]    {$c_{2}$};
\draw (276.35,60) node [anchor=north west][inner sep=0.75pt]    {$
c_{2}'$};
\draw (275,136.4) node [anchor=north west][inner sep=0.75pt]    {$b'$};
\draw (79,105) node [anchor=north west][inner sep=0.75pt]    {$a'$};

\end{tikzpicture}}
&
\scalebox{0.8}{
\tikzset{every picture/.style={line width=0.75pt}} 

\begin{tikzpicture}[x=0.75pt,y=0.75pt,yscale=-1,xscale=1]

\draw [line width=3]    (283.35,126.35) -- (234.35,127.35) ;
\draw [line width=3]    (282.35,84.35) -- (233.35,85.35) ;
\draw    (233.35,85.35) -- (184.35,86.35) ;
\draw    (234.35,127.35) -- (184.35,86.35) ;
\draw [line width=3]    (201.35,178.35) -- (152.35,179.35) ;
\draw  [dash pattern={on 4.5pt off 4.5pt}]  (152.35,179.35) .. controls (180.7,163.1) and (214.7,143.1) .. (234.35,127.35) ;
\draw    (184.35,86.35) -- (152.35,179.35) ;
\draw    (152.35,179.35) -- (133.35,128.35) ;
\draw    (184.35,86.35) -- (133.35,128.35) ;
\draw [line width=2.25]    (133.35,128.35) -- (84.35,128.35) ;
\draw  [fill={rgb, 255:red, 255; green, 255; blue, 255 }  ,fill opacity=1 ] (178,86.35) .. controls (178,82.84) and (180.84,80) .. (184.35,80) .. controls (187.86,80) and (190.7,82.84) .. (190.7,86.35) .. controls (190.7,89.86) and (187.86,92.7) .. (184.35,92.7) .. controls (180.84,92.7) and (178,89.86) .. (178,86.35) -- cycle ;
\draw  [fill={rgb, 255:red, 255; green, 255; blue, 255 }  ,fill opacity=1 ] (127,128.35) .. controls (127,124.84) and (129.84,122) .. (133.35,122) .. controls (136.86,122) and (139.7,124.84) .. (139.7,128.35) .. controls (139.7,131.86) and (136.86,134.7) .. (133.35,134.7) .. controls (129.84,134.7) and (127,131.86) .. (127,128.35) -- cycle ;
\draw  [fill={rgb, 255:red, 255; green, 255; blue, 255 }  ,fill opacity=1 ] (228,127.35) .. controls (228,123.84) and (230.84,121) .. (234.35,121) .. controls (237.86,121) and (240.7,123.84) .. (240.7,127.35) .. controls (240.7,130.86) and (237.86,133.7) .. (234.35,133.7) .. controls (230.84,133.7) and (228,130.86) .. (228,127.35) -- cycle ;
\draw  [fill={rgb, 255:red, 255; green, 255; blue, 255 }  ,fill opacity=1 ] (146,179.35) .. controls (146,175.84) and (148.84,173) .. (152.35,173) .. controls (155.86,173) and (158.7,175.84) .. (158.7,179.35) .. controls (158.7,182.86) and (155.86,185.7) .. (152.35,185.7) .. controls (148.84,185.7) and (146,182.86) .. (146,179.35) -- cycle ;
\draw  [fill={rgb, 255:red, 255; green, 255; blue, 255 }  ,fill opacity=1 ] (227,85.35) .. controls (227,81.84) and (229.84,79) .. (233.35,79) .. controls (236.86,79) and (239.7,81.84) .. (239.7,85.35) .. controls (239.7,88.86) and (236.86,91.7) .. (233.35,91.7) .. controls (229.84,91.7) and (227,88.86) .. (227,85.35) -- cycle ;
\draw  [fill={rgb, 255:red, 255; green, 255; blue, 255 }  ,fill opacity=1 ] (277,126.35) .. controls (277,122.84) and (279.84,120) .. (283.35,120) .. controls (286.86,120) and (289.7,122.84) .. (289.7,126.35) .. controls (289.7,129.86) and (286.86,132.7) .. (283.35,132.7) .. controls (279.84,132.7) and (277,129.86) .. (277,126.35) -- cycle ;
\draw  [fill={rgb, 255:red, 255; green, 255; blue, 255 }  ,fill opacity=1 ] (201.35,178.35) .. controls (201.35,174.84) and (204.19,172) .. (207.7,172) .. controls (211.21,172) and (214.05,174.84) .. (214.05,178.35) .. controls (214.05,181.86) and (211.21,184.7) .. (207.7,184.7) .. controls (204.19,184.7) and (201.35,181.86) .. (201.35,178.35) -- cycle ;
\draw  [fill={rgb, 255:red, 255; green, 255; blue, 255 }  ,fill opacity=1 ] (78,128.35) .. controls (78,124.84) and (80.84,122) .. (84.35,122) .. controls (87.86,122) and (90.7,124.84) .. (90.7,128.35) .. controls (90.7,131.86) and (87.86,134.7) .. (84.35,134.7) .. controls (80.84,134.7) and (78,131.86) .. (78,128.35) -- cycle ;
\draw  [fill={rgb, 255:red, 255; green, 255; blue, 255 }  ,fill opacity=1 ] (276,84.35) .. controls (276,80.84) and (278.84,78) .. (282.35,78) .. controls (285.86,78) and (288.7,80.84) .. (288.7,84.35) .. controls (288.7,87.86) and (285.86,90.7) .. (282.35,90.7) .. controls (278.84,90.7) and (276,87.86) .. (276,84.35) -- cycle ;
\draw [line width=1.5]  [dash pattern={on 1.69pt off 2.76pt}]  (117.7,117.1) -- (98.7,141.1) ;
\draw [line width=1.5]  [dash pattern={on 1.69pt off 2.76pt}]  (141.7,137.1) -- (151.7,165.1) ;
\draw [line width=1.5]  [dash pattern={on 1.69pt off 2.76pt}]  (194.7,167.1) -- (175.7,191.1) ;

\draw (162,69.4) node [anchor=north west][inner sep=0.75pt]    {$u$};
\draw (146,120.4) node [anchor=north west][inner sep=0.75pt]    {$a$};
\draw (211,117.4) node [anchor=north west][inner sep=0.75pt]    {$b$};
\draw (154.35,182.75) node [anchor=north west][inner sep=0.75pt]    {$c_{1}$};
\draw (209.7,181.75) node [anchor=north west][inner sep=0.75pt]    {$c_{1} '$};
\draw (227.35,55.1) node [anchor=north west][inner sep=0.75pt]    {$c_{2}$};
\draw (276.35,55.1) node [anchor=north west][inner sep=0.75pt]    {$ \begin{array}{l}
c_{2} '\\
\end{array}$};
\draw (275,136.4) node [anchor=north west][inner sep=0.75pt]    {$b'$};
\draw (79,100.4) node [anchor=north west][inner sep=0.75pt]    {$a'$};

\end{tikzpicture}}
&
\scalebox{0.8}{
\tikzset{every picture/.style={line width=0.75pt}} 

\begin{tikzpicture}[x=0.75pt,y=0.75pt,yscale=-1,xscale=1]

\draw [line width=3]    (212.35,174.35) -- (234.35,127.35) ;
\draw [line width=3]    (201.35,46.35) -- (152.35,47.35) ;
\draw    (152.35,47.35) -- (184.35,86.35) ;
\draw    (184.35,86.35) -- (234.35,127.35) ;
\draw    (234.35,127.35) -- (133.35,128.35) ;
\draw    (184.35,86.35) -- (133.35,128.35) ;
\draw [line width=2.25]    (133.35,128.35) -- (152.35,170.35) ;
\draw  [fill={rgb, 255:red, 255; green, 255; blue, 255 }  ,fill opacity=1 ] (178,86.35) .. controls (178,82.84) and (180.84,80) .. (184.35,80) .. controls (187.86,80) and (190.7,82.84) .. (190.7,86.35) .. controls (190.7,89.86) and (187.86,92.7) .. (184.35,92.7) .. controls (180.84,92.7) and (178,89.86) .. (178,86.35) -- cycle ;
\draw  [fill={rgb, 255:red, 255; green, 255; blue, 255 }  ,fill opacity=1 ] (127,128.35) .. controls (127,124.84) and (129.84,122) .. (133.35,122) .. controls (136.86,122) and (139.7,124.84) .. (139.7,128.35) .. controls (139.7,131.86) and (136.86,134.7) .. (133.35,134.7) .. controls (129.84,134.7) and (127,131.86) .. (127,128.35) -- cycle ;
\draw  [fill={rgb, 255:red, 255; green, 255; blue, 255 }  ,fill opacity=1 ] (228,127.35) .. controls (228,123.84) and (230.84,121) .. (234.35,121) .. controls (237.86,121) and (240.7,123.84) .. (240.7,127.35) .. controls (240.7,130.86) and (237.86,133.7) .. (234.35,133.7) .. controls (230.84,133.7) and (228,130.86) .. (228,127.35) -- cycle ;
\draw  [fill={rgb, 255:red, 255; green, 255; blue, 255 }  ,fill opacity=1 ] (206,174.35) .. controls (206,170.84) and (208.84,168) .. (212.35,168) .. controls (215.86,168) and (218.7,170.84) .. (218.7,174.35) .. controls (218.7,177.86) and (215.86,180.7) .. (212.35,180.7) .. controls (208.84,180.7) and (206,177.86) .. (206,174.35) -- cycle ;
\draw  [fill={rgb, 255:red, 255; green, 255; blue, 255 }  ,fill opacity=1 ] (146,47.35) .. controls (146,43.84) and (148.84,41) .. (152.35,41) .. controls (155.86,41) and (158.7,43.84) .. (158.7,47.35) .. controls (158.7,50.86) and (155.86,53.7) .. (152.35,53.7) .. controls (148.84,53.7) and (146,50.86) .. (146,47.35) -- cycle ;
\draw  [fill={rgb, 255:red, 255; green, 255; blue, 255 }  ,fill opacity=1 ] (146,170.35) .. controls (146,166.84) and (148.84,164) .. (152.35,164) .. controls (155.86,164) and (158.7,166.84) .. (158.7,170.35) .. controls (158.7,173.86) and (155.86,176.7) .. (152.35,176.7) .. controls (148.84,176.7) and (146,173.86) .. (146,170.35) -- cycle ;
\draw  [fill={rgb, 255:red, 255; green, 255; blue, 255 }  ,fill opacity=1 ] (201.35,46.35) .. controls (201.35,42.84) and (204.19,40) .. (207.7,40) .. controls (211.21,40) and (214.05,42.84) .. (214.05,46.35) .. controls (214.05,49.86) and (211.21,52.7) .. (207.7,52.7) .. controls (204.19,52.7) and (201.35,49.86) .. (201.35,46.35) -- cycle ;
\draw [line width=1.5]  [dash pattern={on 1.69pt off 2.76pt}]  (157.7,143.1) -- (127.7,157.1) ;
\draw [line width=1.5]  [dash pattern={on 1.69pt off 2.76pt}]  (150.7,120.1) -- (215.7,119.1) ;
\draw [line width=1.5]  [dash pattern={on 1.69pt off 2.76pt}]  (237.35,155.85) -- (209.35,145.85) ;

\draw (157,78.4) node [anchor=north west][inner sep=0.75pt]    {$u$};
\draw (129,101.4) node [anchor=north west][inner sep=0.75pt]    {$a$};
\draw (229,100.4) node [anchor=north west][inner sep=0.75pt]    {$b$};
\draw (133.35,46.1) node [anchor=north west][inner sep=0.75pt]    {$c_{2}$};
\draw (209.7,49.75) node [anchor=north west][inner sep=0.75pt]    {$c_{2} '$};
\draw (225,166.4) node [anchor=north west][inner sep=0.75pt]    {$b'$};
\draw (163,165.4) node [anchor=north west][inner sep=0.75pt]    {$a'$};

\end{tikzpicture}}
\end{tabular}
\end{center}
\caption{\label{fig:cycle}Illustration of the proof of Claim~\ref{clm:no-hybrid}, in three cases: $d$ is not $b$ nor a $c_i$, $d$ is one of the $c_i$, $d$ is $b$. The dashed lines represent paths with at least one edge. The dotted lines represent the change we operate: the edges that are crossed out are removed from the matching, the edges that are have a dotted double are added to the matching.}
\end{figure}
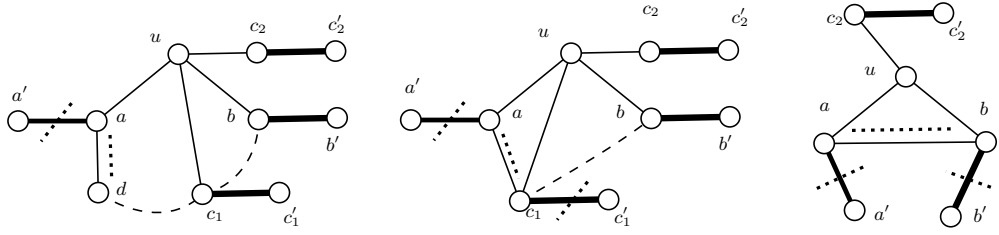

\paragraph*{Second inclusion: putting pieces together}

\begin{claim}\label{clm:no-hybrid}
A graph in the class $\UU^1_{MM}$ is either a tree or is 2-connected.
\end{claim}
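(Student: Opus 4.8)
The proof is by contradiction. Assume $G\in\UU^1_{MM}$ is neither a tree nor $2$-connected. Then $G$ contains both a cycle and a bridge, and since a cycle never uses a bridge, $G$ has at least one bridge $e$ one side of which (in $G-e$) contains a cycle. The aim is to exhibit a \emph{single} maximal matching of $G$ that either leaves unmatched a vertex lying on a cycle — contradicting Claim~\ref{clm:cycle} — or has an unmatched vertex one of whose neighbours is matched along a non-bridge edge — contradicting Claim~\ref{clm:u-not-matched}. Either way $G\notin\UU^1_{MM}$, the desired contradiction.

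I would first normalise the configuration by an extremal choice: among all bridges $e=(x,y)$ and all sides $A\ni x$ of $G-e$ that contain a cycle, take one with $|A|$ minimum. If $A$ itself had a bridge $e'$, then $e'$ is a bridge of $G$ as well, and the side of $A-e'$ not containing $x$ is a strictly smaller side of the bridge $e'$ of $G$; by minimality it is acyclic. Hence every cycle of $A$ lies inside the maximal $2$-connected subgraph $A^{*}$ of $A$ containing $x$; so $A^{*}$ is nontrivial, $x\in A^{*}$, $|A^{*}|\ge 3$, and by the observation of Section~\ref{subsec:graph-notions} every vertex of $A^{*}$ lies on a cycle of $G$. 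In other words $G$ is $A^{*}$ with trees hanging off its vertices, plus the bridge $e$ joining $x$ to the other side $B$.

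Then I would construct the offending matching by a short case analysis. \emph{Case 1: $x$ has a neighbour $z$ of degree $1$.} Match $x$ to one of its (at least two) neighbours $w\in A^{*}$ and extend greedily; $z$ remains unmatched, and its unique neighbour $x$ is matched along the non-bridge edge $(x,w)$, contradicting Claim~\ref{clm:u-not-matched}. \emph{Case 2: $x$ has no degree-$1$ neighbour and $A^{*}-x$ has a perfect matching.} (Then $B$ and every tree hanging off $A^{*}$ at $x$ are nontrivial.) Match $A^{*}-x$ by such a perfect matching, match $y$ inside $B$, match each tree-root at $x$ inside its tree, and extend greedily; this leaves $x$ — a cycle vertex — unmatched, contradicting Claim~\ref{clm:cycle}. \emph{Case 3: $x$ has no degree-$1$ neighbour and $A^{*}-x$ has no perfect matching.} Put $e$ in the matching; pick a maximal matching of $A^{*}-x$ leaving some vertex $v$ unmatched (one exists, as $A^{*}-x$ has no perfect matching), choosing $v$ with no degree-$1$ neighbour in $G$; then extend greedily using maximal matchings of $B-y$ and of the hanging trees. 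The vertex $v$ stays unmatched, is on a cycle, and again contradicts Claim~\ref{clm:cycle}.

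The third case is where the real difficulty sits, and is the step I expect to be the main obstacle: one must make sure the greedy completion never secretly matches the vertex $v$ we mean to keep exposed — which is precisely what a degree-$1$ neighbour of $v$ would force — and, more delicately, that a usable $v$ always exists. The awkward configurations are the ``rigid'' cores (cliques, balanced complete bipartite graphs, and cores such as the bowtie whose vertex-deletions destroy all cycles) carrying pendant edges at bridge endpoints; dealing with them uniformly may require a further appeal to Theorem~\ref{thm:summer79} to locate a maximal matching that does expose a vertex, together with a careful choice of which edge incident to $x$ (or to $v$) to include and in which order to complete the remaining pieces.
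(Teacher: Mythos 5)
Your extremal setup and Cases 1--2 are sound, but Case 3 --- which you yourself flag as the crux --- contains a genuine gap: the vertex $v$ you need (left exposed by some maximal matching of $A^*-x$ \emph{and} having no degree-1 neighbour in $G$) need not exist. Concretely, let $A^*$ be the 4-cycle on $x,a,b,c$, attach pendant vertices $a'$ to $a$ and $c'$ to $c$, and let the bridge $e=(x,y)$ lead to a side $B$ consisting of $y$ and a pendant $y'$. Then $x$ has no degree-1 neighbour and $A^*-x$ is the path $a\,b\,c$, which has no perfect matching, so you are in Case 3; but the only vertices that a maximal matching of $A^*-x$ can leave exposed are $a$ and $c$, and both carry antennas, hence are matched in \emph{every} maximal matching of $G$. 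So your recipe cannot produce the promised exposed cycle vertex, even though the graph is indeed outside $\UU^1_{MM}$: the witness must be found differently (for instance the maximal matching $\{(a,a'),(c,c'),(x,y)\}$ exposes $b$ --- but its restriction to $A^*-x$ is not a maximal matching of $A^*-x$, so it is outside your scheme --- or one uses Claim~\ref{clm:u-not-matched} with the antenna $a'$ after matching $a$ inside the cycle). As written, Case 3 is therefore not a proof, and your closing paragraph correctly identifies but does not resolve this.

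The missing idea is to promote your Case 1 from ``$x$ has a degree-1 neighbour'' to ``\emph{some} vertex $w$ of $A^*$ has a degree-1 neighbour'': force $w$ to be matched to a neighbour inside $A^*$ (a non-bridge edge, since $A^*$ is 2-edge-connected), complete greedily, and the antenna of $w$ stays exposed with its unique neighbour matched along a non-bridge, contradicting Claim~\ref{clm:u-not-matched}. In the remaining Case 3 no vertex of $A^*$ has a pendant neighbour, so any exposed $v\in A^*-x$ is usable, its hanging tree-roots can be matched to children, and your construction then goes through with no appeal to Theorem~\ref{thm:summer79}. Note that the paper takes a different route altogether: it picks a bridge $(u,v)$ with $u$ on a cycle, and when $v$ is not pendant it builds a first maximal matching $M$ forcing $(u,v)$ and the partners of all bridges at $u$, uses Claim~\ref{clm:cycle} (under the assumption $G\in\UU^1_{MM}$) to conclude that the whole 2-connected component of $u$ is matched in $M$, and then rewires $M$ on $v$'s side of the bridge to obtain a maximal matching exposing $u$ --- so it never needs your perfect-matching dichotomy on $A^*-x$, which is precisely where your argument breaks.
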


Consider a graph that is neither a tree nor a 2-connected graph. 
There necessarily exists a bridge $(u,v)$ such that $u$ belongs to a cycle. 
We distinguish two cases. 
\begin{enumerate}
\item Node $v$ is linked only to $u$, that is, $v$ is a pendant node. Then we build a maximal matching $M$ by first forcing $u$ to be matched to a node that is not $v$, and then completing it greedily. Now, if we remove the edge that matches $u$, we do not disconnect the graph since $u$ was part of a cycle, but neither $u$ nor $v$ is matched, thus the matching is not maximal ($(u,v)$ could be added).
Thus the matching $M$ was not 1-robustness.

\item Node $v$ is linked to another node $w$. Let consider $(v_i)_i$ the set of nodes such that $v_i\ne v$ and $(u,v_i)$ is a bridge. By the previous point, we know that there exists some $w_i\ne u$ in $N(v_i)$. Moreover, those $(w_i)$ must be distinct pairwise and from all the other named nodes, otherwise $(u,v_i)$ would not have been a bridge.
The node $w$ and the nodes $(w_i)_i$ cannot be part of the 2-connected component of $u$, otherwise $(u,v)$  and $(u,v_i)$ 
would not be a bridge. 
We build a maximal matching~$M$ by first forcing $(u,v)$  and $(v_i,w_i)$ for all i, and then completing it greedily.
As observed earlier, in the 2-connected component of $u$ every node must belong to a cycle, thus 
by Claim~\ref{clm:cycle}, we get that every node of this component must be matched. 
We now build a second matching $M'$. We start from $M$ and remove from the matching $(u,v)$ and every edge that is in $v$'s side of the bridge. 
Then we force $(v,w)$ in the matching, and complete it greedily. The matching $M'$ is maximal and $u$ is unmatched, since all of its neighbors are matched, hence by  Claim~\ref{clm:cycle} it is not 1-robust, since it belongs to a 2-connected components thus to a cycle. 
\end{enumerate}
This concludes the proof of the claim.
\claimend

To conclude a graph in the class is either a tree, or is 2-connected, and in this last case because of Claim~\ref{clm:cycle}, every node must be matched in every maximal matching. Then Lemma~\ref{lem:universel-MM-1} follows from Theorem~\ref{thm:summer79}.

\subsection{More than one edge removal}

\begin{lemma}\label{lem:max-matching-k-more-than-2}
For any $k\geq 2$, $\UU^{k}_{MM}$ is composed of the cycle on four nodes and of the set of trees. 
\end{lemma}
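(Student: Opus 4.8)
The plan is to reuse everything already proved about the one-edge-removal case and squeeze $\UU^k_{MM}$ between two known sets. First note that, by Property~\ref{prop:basic-inclusions}, for every $k\geq 2$ we have $\UU^k_{MM}\subseteq \UU^2_{MM}\subseteq \UU^1_{MM}$, so by Lemma~\ref{lem:universel-MM-1} any graph in $\UU^k_{MM}$ is a tree, a balanced complete bipartite graph, or a clique of even size. Hence it is enough to establish two things: (i) every tree and the cycle $C_4$ lie in $\UU^k_{MM}$ for every $k$; and (ii) no balanced complete bipartite graph other than $K_{2,2}=C_4$ and no clique of even size other than $K_2$ (which is a tree) lies in $\UU^2_{MM}$ — and since $\UU^k_{MM}\subseteq\UU^2_{MM}$ this excludes them for all $k\geq 2$. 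Putting (i) and (ii) together yields $\UU^k_{MM}=\{\text{trees}\}\cup\{C_4\}$ for every $k\geq 2$.

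For (i): in a tree no edge can be removed while keeping the graph connected, so every maximal matching is vacuously $k$-robust, and all trees belong to $\UU^k_{MM}$. For $C_4$, its only maximal matchings are its two perfect matchings (pairs of opposite edges). Removing one edge turns $C_4$ into the path $P_4$, on which a perfect matching is still maximal since all four vertices stay matched; and removing any two edges from a $4$-cycle disconnects it, so no removal of two or more edges needs to be considered. Hence the perfect matching is $k$-robust for every $k$, so $C_4\in\UU^k_{MM}$ (consistent with $C_4=K_{2,2}$ already being in $\UU^1_{MM}$).

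For (ii): let $G$ be either $K_{n,n}$ with $n\geq 3$ or $K_{2n}$ with $n\geq 2$. In both cases the minimum degree, hence the edge-connectivity, is at least $3$, so removing any two edges leaves $G$ connected. Fix a perfect matching $M=\{(a_i,b_i): i=1,\dots\}$ of $G$ (it is in particular a maximal matching), and remove the two edges $(a_1,b_1)$ and $(a_2,b_2)$, obtaining a connected subgraph $G'$. In $M\cap E(G')$ the four vertices $a_1,b_1,a_2,b_2$ are all unmatched, while $G'$ still contains an edge joining two of them — the edge $a_1a_2$ in the clique case, and the edge $a_1b_2$ in the bipartite case, neither of which was removed. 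Thus $M\cap E(G')$ is not a maximal matching of $G'$, so $M$ is not $2$-robust and $G\notin\UU^2_{MM}$, as required.

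The one point that needs care — the potential obstacle — is the connectivity bookkeeping for the smallest instances ($K_{3,3}$ and $K_4$), which is exactly why the argument invokes the edge-connectivity bound $3$ rather than reasoning case by case; symmetrically, one must verify that $C_4$ admits \emph{no} relevant two-edge deletion at all, so that it genuinely survives the filtering that eliminates every larger bipartite graph and every larger even clique. Everything else is routine and follows directly from Lemma~\ref{lem:universel-MM-1} and Property~\ref{prop:basic-inclusions}.
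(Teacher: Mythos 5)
Your proof is correct and follows essentially the same route as the paper: use Property~\ref{prop:basic-inclusions} and Lemma~\ref{lem:universel-MM-1} to restrict to trees, balanced complete bipartite graphs and even cliques, check trees and $C_4$ directly (any edge removal disconnects a tree, and two removals disconnect $C_4$), and eliminate every larger $K_{n,n}$ and $K_{2n}$ by deleting two matched edges of a maximal (here perfect) matching and exhibiting a surviving edge between two newly exposed vertices. One small wording fix: minimum degree at least $3$ does not in general imply edge-connectivity at least $3$; the step is valid only because for complete and complete bipartite graphs the edge-connectivity equals the minimum degree ($2n-1$ for $K_{2n}$ and $n$ for $K_{n,n}$), so state or verify that fact for these specific graphs rather than inferring it from the degree bound.
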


\begin{proof}
We first prove the reverse inclusion. 
As before, trees are in $\UU^k_{MM}$ for any $k$ because any edge removal disconnects the graph. Then for $C_4$, note that it belongs to $\UU^1_{MM}$, and that the removal of more than one edge disconnects the graph.

For the other direction, we can restrict to $\UU^2_{MM}$, and by definition it is included in $\UU^1_{MM}$. Thus we can simply study the case of the balanced complete bipartite graphs and of the cliques on an even number of nodes. 
Consider  first a complete bipartite graph $B_{k,k}$ with $k>2$ (that is any $B_{k,k}$ larger than $C_4$), and a maximal matching $M$. 
Take two arbitrary edges $(a_1,b_1)$ and $(a_2,b_2)$ from the matching and remove them from the graph. 
The graph is still connected. 
Now the nodes $a_1$ and $b_2$ are unmatched and there is an edge between them, thus the resulting matching is not maximal and $M$ is not 2-robust. 
Thus the only $B_{k,k}$ left in the class $\UU^2_{MM}$ is $C_4$. 
For the cliques on an even number of nodes, consider one that has strictly more than two vertices. A maximal matching $M$ contains at least two edges $(u_1,v_1)$ and $(u_2,v_2)$. When we remove these edges from the graph, we still have a connected graph, $u_1$ and $u_2$ are unmatched, but $(u_1,u_2)$ still exists, thus the resulting matching is not maximal and $M$ was not 2-robust. 
\end{proof}


\section{Maximal independent set}
\label{sec:MIS}

Maximal independent set illustrates yet another behavior for the classes $(\UU_{MIS}^k)_k$: they form an infinite strict hierarchy.  

\subsection{An infinite hierarchy}

\begin{theorem}\label{thm:MIS}
For every $k\geq 1$, $\UU^{k+1}_{MIS}$ is strictly included in $\UU^{k}_{MIS}$.
\end{theorem}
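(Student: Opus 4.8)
The inclusion $\UU^{k+1}_{MIS}\subseteq\UU^{k}_{MIS}$ is already given by Property~\ref{prop:basic-inclusions}, so the whole content of the statement is strictness: for every $k\geq 1$ I want to exhibit a graph that lies in $\UU^{k}_{MIS}$ but not in $\UU^{k+1}_{MIS}$. The plan is to take $G_k$ to be the complete tripartite graph $K_{k+1,k+1,k+1}$, i.e.\ three parts $P_1,P_2,P_3$ of size $k+1$ with all edges between distinct parts. The first routine observation is that, since two vertices of a complete multipartite graph are adjacent exactly when they lie in different parts, every independent set of $G_k$ is contained in a single part; hence the maximal independent sets of $G_k$ are precisely $P_1$, $P_2$ and $P_3$.

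Next I would show $G_k\in\UU^{k}_{MIS}$. Fix a solution, say $S=P_i$, and any set $R$ of at most $k$ edges. If $G_k-R$ is disconnected there is nothing to prove; otherwise $S$ is still independent (removing edges cannot create adjacencies), and for any vertex $v\notin S$, $v$ lies in another part and so is adjacent in $G_k$ to all $k+1$ vertices of $S$; since $|R|\leq k$, at least one of these edges survives in $G_k-R$, so $v$ still has a neighbour in $S$. Thus $S$ remains a maximal independent set of $G_k-R$, so $S$ is $k$-robust; as this holds for each of the three solutions, $G_k\in\UU^{k}_{MIS}$.

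Then I would show $G_k\notin\UU^{k+1}_{MIS}$ by proving that the solution $S=P_1$ is not $(k+1)$-robust. Pick a vertex $v\in P_2$ and let $F$ be the set of the $k+1$ edges joining $v$ to the vertices of $P_1$. All edges of $F$ are incident to $v$, and $v$ stays adjacent to the whole of $P_3$, so $G_k-F$ is still connected. But in $G_k-F$ the vertex $v$ has no neighbour in $S$ (its only neighbours in $P_1$ were the endpoints of $F$, and $v\notin P_1$), hence $S\cup\{v\}$ is independent and $S$ is no longer maximal. So removing the $k+1$ edges of $F$ breaks $S$ while keeping the graph connected, i.e.\ $S$ is not $(k+1)$-robust, and $G_k\notin\UU^{k+1}_{MIS}$. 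Combining the two halves gives $\UU^{k+1}_{MIS}\subsetneq\UU^{k}_{MIS}$ for every $k\geq 1$.

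The constructions and verifications are all elementary; the only genuine design issue — and where I expect a naive attempt to fail — is the choice of the number of parts. Two parts (the graph $K_{k+1,k+1}$) also makes every solution $k$-robust, but it is in fact $\infty$-robust: to uncover a vertex one must delete \emph{all} of its edges, which isolates it, so no disconnecting-free witness exists. Three parts is exactly what is needed so that the "escape" neighbours of $v$ (here the vertices of $P_3$) keep the graph connected after the $k+1$ deletions, which is precisely what separates level $k+1$ from level $k$. One should also keep in mind that, because $\UU^k_{MIS}$ asks that \emph{all} solutions be robust, the $k$-robustness half crucially uses that the maximal independent sets of $G_k$ are completely understood — they are just the three parts — so there is no stray small or fragile solution to worry about.
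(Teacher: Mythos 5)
Your proof is correct, and while it follows the same overall strategy as the paper (exhibit, for each $k$, an explicit graph whose maximal independent sets are completely enumerable, show each is $k$-robust because every non-selected vertex has $k+1$ selected neighbours, then kill one solution by deleting exactly those $k+1$ edges at a vertex that keeps the graph connected), your witness graph is genuinely different: the paper attaches a pendant vertex $v$ to one side of a complete bipartite graph $K_{k+2,k+2}$ and analyses its three asymmetric solutions $A$, $\{v\}\cup B$, $\{v\}\cup(A\setminus\{u\})$, whereas you use the complete tripartite graph $K_{k+1,k+1,k+1}$, whose three solutions are the three parts and are interchangeable by symmetry. Your construction buys two things. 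First, the verification is uniform: every non-selected vertex has exactly $k+1$ selected neighbours in every solution, so there is no case analysis and no vertex (like the paper's $u$ in the solution $\{v\}\cup(A\setminus\{u\})$, which has only the single selected neighbour $v$) whose robustness has to be rescued by a disconnection argument. Second, your graph is $2$-connected and pendant-free, which is aesthetically preferable given that the paper itself remarks, at the start of its Section~5.2, that its hierarchy examples ``heavily rely on pendant nodes'' and thereby exploit the connectivity escape clause in the definition of robustness; your example shows the hierarchy is strict even without that crutch. Your closing remark about why two parts would fail (in $K_{k+1,k+1}$ uncovering a vertex forces isolating it, so every solution is $\infty$-robust) correctly identifies the design constraint, and it is exactly the role played in the paper's construction by the second side $B$ together with the pendant vertex.
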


\begin{proof}
Let $k\geq 1$.
We will define a graph $G_k$, and prove that it belongs to $\mathcal{U}^{k}_{MIS}$ but not to $\mathcal{U}^{k+1}_{MIS}$.

To build $G_k$, consider a bipartite graph with $k+2$ nodes on each of the sides $A$ and $B$, and add a pendant neighbor $v$ to a node $u$ on the side $A$. See Figure~\ref{fig:MIS}.
This graph has only three MIS: $A$, $v \cup B$, and $v\cup (A\setminus u)$. 
Indeed: (1) if the MIS contains $u$, then it cannot contain vertices outside of $A$, and to be maximal it contains all of $A$, (2) if it contains a vertex of $B$, it cannot contain a vertex of $A$, and by maximality it contains all of $B$ and $v$, and (3) if it contains $v$, and no vertex of $B$, then by maximality it is $v \cup (A \setminus u) $.

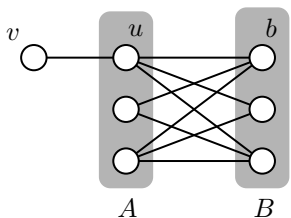
\begin{figure}[!h]
\centering
\tikzset{every picture/.style={line width=0.75pt}} 

\begin{tikzpicture}[x=0.75pt,y=0.75pt,yscale=-1,xscale=1]

\draw  [color={rgb, 255:red, 180; green, 180; blue, 180 }  ,draw opacity=1 ][fill={rgb, 255:red, 180; green, 180; blue, 180 }  ,fill opacity=1 ] (253.3,83.79) .. controls (253.3,80.9) and (255.64,78.57) .. (258.52,78.57) -- (274.18,78.57) .. controls (277.06,78.57) and (279.4,80.9) .. (279.4,83.79) -- (279.4,163.35) .. controls (279.4,166.23) and (277.06,168.57) .. (274.18,168.57) -- (258.52,168.57) .. controls (255.64,168.57) and (253.3,166.23) .. (253.3,163.35) -- cycle ;
\draw  [color={rgb, 255:red, 180; green, 180; blue, 180 }  ,draw opacity=1 ][fill={rgb, 255:red, 180; green, 180; blue, 180 }  ,fill opacity=1 ] (185.3,85.79) .. controls (185.3,82.9) and (187.64,80.57) .. (190.52,80.57) -- (206.18,80.57) .. controls (209.06,80.57) and (211.4,82.9) .. (211.4,85.79) -- (211.4,163.35) .. controls (211.4,166.23) and (209.06,168.57) .. (206.18,168.57) -- (190.52,168.57) .. controls (187.64,168.57) and (185.3,166.23) .. (185.3,163.35) -- cycle ;
\draw    (152.35,103.35) -- (266.35,103.35) ;
\draw    (198.35,103.35) -- (266.35,129.35) ;
\draw    (198.35,103.35) -- (266.35,155.35) ;
\draw    (198.35,129.35) -- (266.35,103.35) ;
\draw    (198.35,155.35) -- (266.35,129.35) ;
\draw    (198.35,129.35) -- (266.35,155.35) ;
\draw    (198.35,155.35) -- (266.35,155.35) ;
\draw    (198.35,155.35) -- (266.35,103.35) ;
\draw  [fill={rgb, 255:red, 255; green, 255; blue, 255 }  ,fill opacity=1 ] (192,103.35) .. controls (192,99.84) and (194.84,97) .. (198.35,97) .. controls (201.86,97) and (204.7,99.84) .. (204.7,103.35) .. controls (204.7,106.86) and (201.86,109.7) .. (198.35,109.7) .. controls (194.84,109.7) and (192,106.86) .. (192,103.35) -- cycle ;
\draw  [fill={rgb, 255:red, 255; green, 255; blue, 255 }  ,fill opacity=1 ] (192,129.35) .. controls (192,125.84) and (194.84,123) .. (198.35,123) .. controls (201.86,123) and (204.7,125.84) .. (204.7,129.35) .. controls (204.7,132.86) and (201.86,135.7) .. (198.35,135.7) .. controls (194.84,135.7) and (192,132.86) .. (192,129.35) -- cycle ;
\draw  [fill={rgb, 255:red, 255; green, 255; blue, 255 }  ,fill opacity=1 ] (192,155.35) .. controls (192,151.84) and (194.84,149) .. (198.35,149) .. controls (201.86,149) and (204.7,151.84) .. (204.7,155.35) .. controls (204.7,158.86) and (201.86,161.7) .. (198.35,161.7) .. controls (194.84,161.7) and (192,158.86) .. (192,155.35) -- cycle ;
\draw  [fill={rgb, 255:red, 255; green, 255; blue, 255 }  ,fill opacity=1 ] (260,103.35) .. controls (260,99.84) and (262.84,97) .. (266.35,97) .. controls (269.86,97) and (272.7,99.84) .. (272.7,103.35) .. controls (272.7,106.86) and (269.86,109.7) .. (266.35,109.7) .. controls (262.84,109.7) and (260,106.86) .. (260,103.35) -- cycle ;
\draw  [fill={rgb, 255:red, 255; green, 255; blue, 255 }  ,fill opacity=1 ] (260,129.35) .. controls (260,125.84) and (262.84,123) .. (266.35,123) .. controls (269.86,123) and (272.7,125.84) .. (272.7,129.35) .. controls (272.7,132.86) and (269.86,135.7) .. (266.35,135.7) .. controls (262.84,135.7) and (260,132.86) .. (260,129.35) -- cycle ;
\draw  [fill={rgb, 255:red, 255; green, 255; blue, 255 }  ,fill opacity=1 ] (260,155.35) .. controls (260,151.84) and (262.84,149) .. (266.35,149) .. controls (269.86,149) and (272.7,151.84) .. (272.7,155.35) .. controls (272.7,158.86) and (269.86,161.7) .. (266.35,161.7) .. controls (262.84,161.7) and (260,158.86) .. (260,155.35) -- cycle ;
\draw  [fill={rgb, 255:red, 255; green, 255; blue, 255 }  ,fill opacity=1 ] (146,103.35) .. controls (146,99.84) and (148.84,97) .. (152.35,97) .. controls (155.86,97) and (158.7,99.84) .. (158.7,103.35) .. controls (158.7,106.86) and (155.86,109.7) .. (152.35,109.7) .. controls (148.84,109.7) and (146,106.86) .. (146,103.35) -- cycle ;

\draw (192.52,173) node [anchor=north west][inner sep=0.75pt]    {$A$};
\draw (260.52,173) node [anchor=north west][inner sep=0.75pt]    {$B$};
\draw (137,87) node [anchor=north west][inner sep=0.75pt]    {$v$};
\draw (198,85) node [anchor=north west][inner sep=0.75pt]    {$u$};
\draw (266.52,81.97) node [anchor=north west][inner sep=0.75pt]    {$b$};

\end{tikzpicture}
\caption{\label{fig:MIS}. Illustration of the graph $G_k$ in the proof of Theorem~\ref{thm:MIS}.}
\end{figure}

We claim that these three MIS are $k$-robust, therefore $G_k$ is in $\mathcal{U}^{k}_{MIS}$. 
Suppose an MIS is not $k$-robust. Then there exists a vertex $w$ that is not part of the MIS, such that after at most $k$ edge removals, it has no neighbor in the MIS anymore. 
Let us make a quick case analysis depending on who is this vertex $w$. 
It cannot be $v$, since removing the edge $(u,v)$ would disconnect the graph. 
It cannot be a vertex of $A$, nor of $B$, because in all MIS mentioned, all non selected nodes (except $v$) have at least $k+1$ selected neighbors.

Now we claim that $v \cup (A \setminus u) $ is not $(k+1)$-robust, thus $G_k$ does not belong to $\mathcal{U}^{k+1}_{MIS}$.
We choose a vertex $b$ on the $B$ side, and remove all the edges $(a,b)$ for $a\in A\setminus u$. 
This is a set of $k+1$ edges whose removal does not disconnect the graph, but leaves $b$ without selected neighbors. This $v \cup (A \setminus u)$ is not $(k+1)$-robust.
\end{proof}

\subsection{A structure theorem for $\UU_{MIS}^k$}

The construction used in the proof of Theorem~\ref{thm:MIS} is very specific and does not really inform about the nature of the graphs in $\UU_{MIS}^k$.
It can be generalized, with antennas on both sides and arbitrarily large (unbalanced) bipartite graphs with arbitrary number of antennas per nodes, but it is still specific. 
Moreover these construction heavily rely on pendant nodes, that are in some sense abusing the fact that we do not worry about the correctness of the solution if the graph gets disconnected.

In order to better understand these classes, and to give a more flexible way to build such graphs, we prove a theorem about how the class behaves with respect to the join operation (Definition~\ref{def:join}).
 
We denote by $\mathcal{G}_p$ the class of graphs where every maximal independent set has size \emph{at least} $p$. 
We say that a graph class is \emph{stable by an operation} if, by applying this operation to any (set of) graph(s) from the class, the resulting graph is also in the class.

\begin{theorem}
For all $k$, the class $\UU_{MIS}^k \cap \mathcal{G}_{k+1}$ is stable by join operation. Also, if either $G$ or $H$ is not in $\UU^{k+1}_{MIS}$, then $join(G,H)$ is not in $\UU^{k+1}_{MIS}$ either.
\end{theorem}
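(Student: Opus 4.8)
The plan is to first describe the maximal independent sets of a join and then treat the two assertions separately.

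\textbf{Step 1 (MIS of a join).} In $join(G,H)$ every vertex of $G$ is adjacent to every vertex of $H$, so no independent set meets both sides; hence every independent set lies inside $V(G)$ or inside $V(H)$. I would first check that the MIS of $join(G,H)$ are exactly those of $G$ together with those of $H$: if $S\subseteq V(G)$ is maximal in $join(G,H)$ it is clearly maximal in $G$, and conversely an MIS $S$ of $G$ is non-empty, so no vertex of $H$ can be added (each is adjacent to all of $S$) and by maximality in $G$ no vertex of $G$ can be added either. The $H$-side is symmetric. Consequently, if $G,H\in\mathcal G_{k+1}$ then every MIS of $join(G,H)$ has size $\ge k+1$, so the $\mathcal G_{k+1}$ half of stability is automatic; everything below is about the $\UU^{k}_{MIS}$ half.

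\textbf{Step 2 (reduction).} Assume $G,H\in\UU^{k}_{MIS}\cap\mathcal G_{k+1}$, let $S$ be an MIS of $join(G,H)$, and (Step 1) assume without loss of generality that $S$ is an MIS of $G$, so $|S|\ge k+1$. Remove a set $F$ of at most $k$ edges so that $join(G,H)\setminus F$ stays connected, and write $F=F_G\cup F_H\cup F_\times$ for its parts inside $G$, inside $H$, and between them. Independence of $S$ survives, so I only need maximality: every $w\notin S$ must still have a neighbour in $S$. If $w\in V(H)$ this is automatic, since $w$ is joined by cross edges to all $\ge k+1$ vertices of $S$ while at most $k$ edges were deleted. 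If $w\in V(G)\setminus S$, its $S$-neighbours are reached only through $G$-edges, so $w$ can be stranded only if every edge from $w$ to $N_G(w)\cap S$ lies in $F_G$; in particular $d:=|N_G(w)\cap S|\le|F_G|\le k$.

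\textbf{Step 3 (the crux).} It remains to rule out such a stranded $w$, equivalently to show it is enough that $S$ stays an MIS of $G\setminus F_G$. Let $E_w$ be the $d\le k$ edges from $w$ to its $S$-neighbours. Deleting only $E_w$ from $G$ leaves $w$ with no $S$-neighbour, so $S$ is not an MIS of $G\setminus E_w$; since $S$ is $k$-robust in $G$ and $|E_w|\le k$, this forces $G\setminus E_w$ to be disconnected. As every edge of $E_w$ is incident to $w$, the component $A$ of $w$ in $G\setminus E_w$ and $B:=V(G)\setminus A\ne\emptyset$ satisfy that every $G$-edge between $A$ and $B$ lies in $E_w$, hence joins $w$ to a vertex of $S\cap B$; thus $B$ is attached to the rest of $G$ only through $w$, via edges into $S$. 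Here the size hypothesis must be used: by analysing how $N_G(w)\cap S$ splits between $A$ and $B$, one derives from this ``pendant-like'' block either an MIS of $G$ of size $<k+1$ (contradicting $G\in\mathcal G_{k+1}$) or an MIS of $G$ that fails to be $k$-robust (contradicting $G\in\UU^{k}_{MIS}$). This last step --- taming the tree-like/pendant portions of $G$ and $H$, which is exactly what makes the two hypotheses work together --- is the technical heart and the place I expect the main difficulty. Granting it, Step 2 shows $S$ remains an MIS after the deletion, so $join(G,H)\in\UU^{k}_{MIS}$.

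\textbf{Step 4 (the second assertion).} Suppose $G\notin\UU^{k+1}_{MIS}$ ($H$ is symmetric): there is an MIS $S$ of $G$ and a set $F'$ of at most $k+1$ edges of $G$ with $G\setminus F'$ connected but $S$ not an MIS of $G\setminus F'$, witnessed by a vertex $w\in V(G)\setminus S$ with no $S$-neighbour in $G\setminus F'$. By Step 1, $S$ is an MIS of $join(G,H)$; deleting the same $F'$ from $join(G,H)$ uses at most $k+1$ edges, keeps the graph connected (removing edges internal to one side never disconnects a join with a non-empty other side), and still leaves $w$ with no $S$-neighbour, because all of $w$'s $S$-neighbours lie inside $G$. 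Hence $S$ is not $(k+1)$-robust in $join(G,H)$, i.e.\ $join(G,H)\notin\UU^{k+1}_{MIS}$. I expect this direction to be routine.
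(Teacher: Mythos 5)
Your Steps 1, 2 and 4 are correct, and they are essentially the paper's own argument: every MIS of $join(G,H)$ lies entirely in one side and is an MIS of that side, a vertex of the opposite side keeps a selected neighbour because at most $k$ of its at least $k+1$ cross edges to $S$ are deleted, and the second assertion follows by replaying inside the join a witness of non-$(k+1)$-robustness taken inside $G$. The genuine gap is Step 3, which you explicitly leave open (``granting it''), and it is not a deferrable technicality: the statement you reduced the theorem to is false. Take the graph $G_k$ from the proof of Theorem~\ref{thm:MIS}, i.e.\ $K_{k+2,k+2}$ with sides $A$ and $B$ and a pendant vertex $v$ attached to $u\in A$. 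It belongs to $\UU^{k}_{MIS}\cap\mathcal{G}_{k+1}$ (its three MIS have size at least $k+2$ and are $k$-robust, the pendant $v$ being protected only by the disconnection clause in the definition of robustness), yet for the MIS $S=A$ the non-selected vertex $v$ has a single $S$-neighbour, namely $u$. Exactly as your Steps 2--3 describe, $(u,v)$ is a bridge of $G_k$, but in $join(G_k,H)$ it no longer is: deleting this one edge keeps the join connected and leaves $v$ with no neighbour in $S$, so $join(G_k,H)\notin\UU^{k}_{MIS}$ for every non-empty $H$. Hence no analysis of the ``pendant-like block'' can yield the contradiction you hope for; your reduction is sound, and what it really shows is that $\mathcal{G}_{k+1}$ is too weak a hypothesis --- one would need something like ``in every MIS, every non-selected vertex has at least $k+1$ selected neighbours'', which is the property your Step 2 (and the paper) actually use on the $H$-side.

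For comparison, the paper's proof consists of your Steps 1, 2 and 4, but at the point corresponding to your Step 3 it dismisses a stranded vertex on the $G$-side in one sentence, asserting that it would contradict the $k$-robustness of $S$ in $G$; this tacitly assumes that the deleted edges leave $G$ itself connected, which is exactly the loophole you isolated (robustness only forces $G\setminus E_w$ to be disconnected, and the join restores connectivity through $H$). So you correctly located the crux that the paper glosses over, but your proposal does not close it, and, as the example above shows, it cannot be closed under the stated hypotheses.
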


\begin{proof}
Let us start with the first statement of the theorem.
Consider two graphs $G$ and $H$ in  $\UU_{MIS}^k \cap \mathcal{G}_{k+1}$. 
We prove that $J=join(G,H)$ is also in $\UU_{MIS}^k \cap \mathcal{G}_{k+1}$. 

\begin{claim}
Any MIS of $J$ is either completely contained in the vertex set of $G$, and is an MIS of $G$, or   contained in the vertex set of $H$, and is an MIS of $H$.  
\end{claim}

Consider an independent set in $J$. 
If it has a node $u$ in $G$, then it has no node in $H$, as by construction, all nodes of $H$ are linked to $u$. 
The analogue holds if the independent set has a node in $H$.
Thus any independent set is either completely contained in $G$ or completely contained in $H$.
Now, a set is maximal independent in $G$ (resp. $H$) alone if and only if it is maximal independent in $G$ (resp. $H$) inside $J$. Indeed the only edges that we have added are between nodes of $G$ and nodes of $H$. This proves the claim.
\claimend

Therefore, the resulting graph is in $\mathcal{G}_{k+1}$. Now for the $k$-robustness, consider without loss of generality an MIS of $J$ that is in part~$G$, and suppose it is not $k$-robust. 
In this case there must exists a non-selected vertex $v$, that has no more selected neighbors after the removal of $k$ edges (while the graph stays connected). 
This node cannot be in the part $G$, otherwise the same independent set in the graph $G$ would not be $k$-robust. 
And it cannot be in the part $H$, since every node of $H$ is linked to all the vertices of the MIS, and this set has size at least $k+1$ since $G\in \mathcal{G}_{k+1}$. 

Now, let us move on to the second statement of the theorem. Let's assume that $G$ has an MIS $S$ and $k+1$ edges such that their removal makes that $S$ is not longer maximal (i.e. there exists some $u$ that can be added to the set). Then, $S$ is also an MIS of $join(G,H)$, and the removal of the same edges will allow to add $u$ to the set, as the only new neighbors of $u$ are from $H$ that does not contain any node of the chosen MIS
\end{proof}


\section{The existence of a robust MIS is NP-hard}
\label{sec:NPH}

Remember that we have defined two types of graph classes related to robustness. For a given problem, and a parameter $k$, the universal class is the class where every solution is $k$-robust. This is the version we have explored so far. For this version, recognizing the graphs of the class is easy since these have simple explicit characterization. 
The second type of class is the existential type, where we want that there exists a solution that is $k$-robust. And here the landscape is much more complex. 
Indeed, in \cite{CasteigtsDPR20} in the simpler case of robustness without parameter, there is no explicit characterization of the existential class, only a rather involved algorithm.
In this section we show that, when we add the parameter $k$ the situation becomes even more challenging: the algorithm of \cite{CasteigtsDPR20} runs in polynomial time, and here we show that the recognition of $\mathcal{E}_{MIS}^1$ is NP-hard.

\begin{theorem}\label{thm:MIS-NP-hard}
For every odd integer $k$, it is NP-hard to decide whether a graph belongs to $\mathcal{E}_{MIS}^k$.
\end{theorem}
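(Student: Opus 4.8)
The plan is to give a polynomial-time reduction from $3$-\textsc{Sat}: from a formula $\varphi$ we build a graph $G_\varphi = G_\varphi(k)$ such that $\varphi$ is satisfiable if and only if $G_\varphi \in \mathcal{E}_{MIS}^k$. We arrange for $G_\varphi$ to be $(k+1)$-edge-connected, so that by Property~\ref{prop:connectivity} an MIS $S$ is $k$-robust precisely when every vertex outside $S$ has at least $k+1$ neighbours in $S$; this is the working characterisation we reduce to. We first construct $G_\varphi$ for $k=1$ and then lift it to an arbitrary odd $k$; the oddness of $k$ is used only in the lift, to keep the sizes of the blown-up gadgets consistent with the intended maximal independent sets.

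For the $k=1$ case, the skeleton of $G_\varphi$ is a $2$-edge-connected ``frame'' into which the gadgets are plugged, so that no unintended bridge ever provides a free pass for the robustness condition. For each variable $x_i$ we use a dense block (a complete bipartite block, which for $k=1$ is a $C_4$) whose two sides $P_i$ and $Q_i$ play the roles of ``$x_i$ true'' and ``$x_i$ false''; maximality forces a $k$-robust MIS to select one whole side of each such block, which we read off as a truth assignment, and the selected side exposes the literal that is true. For each clause $C_j=\ell_1\vee\ell_2\vee\ell_3$ we introduce a clause vertex $c_j$ adjacent to the vertices representing ``$\ell_1$ is true'', ``$\ell_2$ is true'' and ``$\ell_3$ is true'', and we attach to $c_j$ a fixed helper gadget with a twofold purpose: to guarantee that $c_j$ always has between $1$ and $k$ selected neighbours coming from the helper alone -- so that $c_j$ is forced out of any candidate $S$, and, if the clause is unsatisfied, $c_j$ has at most $k$ selected neighbours and is therefore not $k$-robust -- while a single true literal contributes at least $k+1$ further selected neighbours to $c_j$, so that a satisfied clause makes $c_j$ $k$-robustly dominated. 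Finally, every auxiliary vertex of the variable, literal and helper gadgets is wired so that, as soon as a consistent side is chosen in each variable block, it already has at least $k+1$ selected neighbours and hence never obstructs robustness.

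Correctness is then checked in the two directions. Given a satisfying assignment, we select the corresponding side in every variable block and complete greedily inside the helper gadgets; using the characterisation above one verifies vertex by vertex that the resulting MIS is $k$-robust, the clause vertices being handled by the true literal of each clause. Conversely, from a $k$-robust MIS $S$ of $G_\varphi$ one first argues that $S$ is in the expected ``canonical'' shape -- each variable block contributes a whole side and no clause vertex lies in $S$ -- and then reads off a truth assignment; the $k$-robustness of every clause vertex $c_j$ forces at least one of its literals to be true, so the assignment satisfies $\varphi$. Since $|V(G_\varphi)|$ is polynomial in $|\varphi|$, this establishes NP-hardness for $k=1$, and, by the stated blow-up together with Property~\ref{prop:basic-inclusions}, for every odd $k$.

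The main obstacle is the design and analysis of the clause gadget together with its helper. One must simultaneously ensure that (i) ``$c_j$ has at least $k+1$ selected neighbours'' is equivalent to ``$C_j$ is satisfied''; (ii) $c_j$ cannot satisfy the $k$-robustness condition ``for free'' -- neither by entering $S$ (which the helper must preclude, and which is delicate since membership in an MIS cannot be forbidden by purely local means) nor through a spurious bridge (excluded by the $2$-edge-connected frame); and (iii) no auxiliary vertex is itself non-robust under some choice of sides in the variable blocks. Pinning down a helper gadget that achieves all three, and whose blow-up to odd $k$ still has the right maximal-independent-set structure, is where the real work lies; once the gadget is fixed, the variable gadgets and the two implications of the reduction are routine.
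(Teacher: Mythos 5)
There is a genuine gap: your argument is a plan for a 3-\textsc{Sat} reduction in which the decisive object --- the clause/helper gadget --- is never constructed, and you say yourself that pinning it down ``is where the real work lies.'' Everything hinges on a helper that, \emph{in every} maximal independent set of the final graph, contributes at least one but at most $k$ selected neighbours to $c_j$, while all of its own non-selected vertices are $(k+1)$-dominated and while it creates no alternative $k$-robust MIS unrelated to an assignment; none of this is exhibited or verified, so neither direction of the equivalence can actually be checked. The variable-gadget claim is also unjustified as stated: once the literal vertices acquire edges to clause vertices and helpers, maximality no longer forces an MIS to contain a \emph{whole} side of a complete bipartite block (an MIS meets such a block in a subset of one side, which may be proper if external selected neighbours exclude some of its vertices), so the ``canonical shape'' step of the converse direction is unsupported. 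Finally, the lift from $k=1$ to arbitrary odd $k$ is only asserted: Property~\ref{prop:basic-inclusions} gives inclusions between the classes $\mathcal{E}^k_{MIS}$, not a transfer of hardness, so you would need an explicit construction with a proof that robustness thresholds translate correctly.

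For comparison, the paper sidesteps gadget design entirely. It observes (via Property~\ref{prop:connectivity}) that in a $2$-connected graph the $1$-robust maximal independent sets are exactly the independent $2$-dominating sets, invokes the known NP-hardness of \textsc{Perfect stable}~\cite{CroitoruS83}, strengthened to $2$-connected inputs by adding a universal vertex, and then lifts to odd $k$ by an explicit product: take $k$ copies of $G$ and join $u_x$ to $v_y$ for every edge $(u,v)$ of $G$ and all copies $x,y$; any MIS of the product is the same set repeated in each copy, and $G$ has a $1$-robust MIS iff the product has a $(2k-1)$-robust MIS. Your working characterisation of $k$-robustness under high connectivity matches the paper's, but to turn your outline into a proof you would either have to actually build and analyse the missing gadgets (in effect re-proving hardness of independent $2$-domination) or replace them by a reduction from a known hard problem as the paper does, and you would have to supply and prove the odd-$k$ lift.
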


The rest of this section is devoted to the proof of this theorem. It is based on the NP-completeness of the following problem.\\ 

\noindent\textsc{Perfect stable}\\  
Input: A graph $G=(V,E)$.\\  
Question: Does there exists a subset of vertices $S \subset V$ that is independent 2-dominating?\\ 

Remember that a set is independent 2-dominating if no two neighbors can be selected, and every non-selected vertex should have at least two selected neighbors. Just to get some intuition about why we are interested in this problem, note that with independent 2-dominating after removing an edge between a selected and a non-selected vertex, the non-selected vertex is still dominated.
It was proved in~\cite{CroitoruS83} that \textsc{Perfect stable} is NP-hard in general. We will need the following strengthening of this hardness result. 

\begin{lemma}\label{lem:NP-perfect-stable}
Deciding whether a 2-connected graph has an independent 2-dominating set is NP-complete.
\end{lemma}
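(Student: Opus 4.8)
The plan is to reduce from the general \textsc{Perfect stable} problem, which is NP-hard by~\cite{CroitoruS83}, to its restriction to 2-connected graphs. Membership in NP is immediate: an independent 2-dominating set is a certificate that can be checked in polynomial time (verify independence, and verify that every unselected vertex has at least two selected neighbors). So the work is entirely in the hardness reduction.

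Given an arbitrary instance $G=(V,E)$ of \textsc{Perfect stable}, I would build a 2-connected graph $G'$ by attaching, to each vertex $v \in V$, a small gadget whose role is twofold: it must force enough connectivity so that $G'$ becomes 2-connected (no bridges, no cut vertices), and it must be ``neutral'' with respect to independent 2-domination, in the sense that $G'$ has an independent 2-dominating set if and only if $G$ does. The natural first attempt is to replace each vertex $v$ by a triangle (or a short even cycle) through $v$ and link these gadgets together into one global cycle, so that every vertex lies on a cycle and there are no bridges; one then checks that the gadget vertices can always be 2-dominated internally regardless of whether $v$ is selected, and that selecting a gadget vertex never helps dominate $V$. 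The key design constraint is that the gadget must admit a consistent local choice of selected vertices for \emph{both} the case $v \in S$ and the case $v \notin S$, so that extending a solution of $G$ to a solution of $G'$ is always possible, and conversely restricting a solution of $G'$ to $V$ yields a valid solution of $G$ (in particular, a selected gadget vertex adjacent to $v$ must never be allowed to ``count'' as one of the two dominators of an unselected $v$, or the equivalence breaks — this is what dictates the parity/size of the gadget cycles).

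The main obstacle I anticipate is exactly this delicate balancing: the 2-domination requirement is rigid (every non-selected vertex needs \emph{two} selected neighbors, and independence forbids selected vertices from being adjacent), so a carelessly chosen gadget can either make $G'$ trivially unsolvable or, worse, make it solvable even when $G$ is not — for instance by letting gadget vertices supply the missing dominators for a vertex of $V$. I would resolve this by using a gadget in which each vertex $v$ of the original graph is \emph{forced} to behave exactly as in $G$: e.g. give $v$ a pair of private pendant-like structures arranged in a cycle so that $v$'s gadget-neighbors are never in any independent 2-dominating set, which pins the domination of $v$ to come entirely from $N_G(v) \cap S$ while still killing all bridges by routing a spanning cycle through all gadgets. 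After fixing the gadget, the proof is routine: argue $G'$ is 2-connected (every vertex is on a cycle and there is no cut vertex), then prove both directions of the equivalence $G \in \textsc{Perfect stable} \iff G' \in \textsc{Perfect stable}$ by the explicit local extension/restriction described above, and finally note the construction is polynomial-time.
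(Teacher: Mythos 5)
Your overall strategy (reduce from general \textsc{Perfect stable} to the 2-connected case by locally modifying $G$) is sound, and your NP-membership remark is fine, but the proof has a genuine gap: the reduction is never actually specified. Everything hinges on the existence of a ``neutral'' gadget, and you only describe the constraints it must satisfy, not a gadget that satisfies them. Worse, the concrete candidates you float fail in exactly the ways you yourself warn about. Attaching a triangle $\{v,a,b\}$ makes \emph{every} instance infeasible: if $a$ is selected then $b$ is unselected with at most one selected neighbor, and if $v$ is selected then $a$ fails, so no independent 2-dominating set can exist at all. Attaching a 4-cycle $v,a,b,c$ makes \emph{every} instance feasible: selecting $a$ and $c$ in each gadget 2-dominates both $b$ and $v$, so $G'$ always has a solution regardless of $G$. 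A 5-cycle through $v$ forbids $v$ from ever being selected. So the ``delicate balancing'' you identify is not a routine afterthought --- it is the entire content of the proof, and the proposal does not resolve it. The additional idea of routing a global spanning cycle through the gadgets introduces edges between gadgets of different vertices, whose effect on independence and 2-domination would also have to be analyzed and is not.

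For comparison, the paper's proof avoids gadget design altogether: given a connected $G$ with at least one edge, it adds a single universal vertex $u$ adjacent to all of $V(G)$. The resulting graph is 2-edge-connected, $u$ can never be selected (independence would then leave every other vertex with at most one selected neighbor), and any independent 2-dominating set of $G$ has size at least two, hence 2-dominates $u$; this gives the equivalence in a few lines. If you want to salvage your per-vertex-gadget route, you would need to exhibit one concrete gadget, prove that its attachment vertices adjacent to $v$ are never selected yet always internally 2-dominated in both the ``$v$ selected'' and ``$v$ unselected'' cases, and verify 2-connectivity of the result --- none of which is currently done.
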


Note that this lemma does not follow directly from~\cite{CroitoruS83} because the reduction there does use some non-2-connected graphs. 

\begin{proof}
Let $G$ be an arbitrary connected graph with at least one edge. 
Consider $G'$ to be the same as $G$ but with a universal vertex, that is, $G$ with an additional vertex $u$ that is adjacent to all the vertices of $G$.  
This graph is 2-edge connected. 
Indeed, since $G$ is connected and has at least two vertices, removing any edge $(u,v)$ with $v\in V(G)$ cannot disconnect the graph, and removing an edge from $G$ does not disconnect the graph because all nodes are linked through~$v$.

We claim that $G'$ has an independent 2-dominating set if and only if $G$ has one. 
First, suppose that $G$ has such a set $S$. Note that the set $S$ has at least two selected vertices. Indeed, $G$  has at least one edge, which implies that at least one vertex is not selected (by independence), and such a vertex should be dominated by at least two selected vertices.
Now we claim that $S$ is also a solution for $G'$. Indeed, the addition of $u$ to the graph does not impact the independence of $S$, nor the 2-domination of the nodes of $G$, and $v$ is covered at least twice, since there are at least two selected vertices in $G$.
Second, if $G'$ has independent 2-dominating set $S'$, it cannot contain $v$. 
Indeed, because of the independence condition, if $v$ is selected, then no other node can be selected, and then the 2-domination condition is not satisfied. 
Then $S'$ is contained in $G$ and is 
an independent 2-dominating set of $G$.
\end{proof}

Now, let us formalise the connection  between robustness and independent 2-domination. 

\begin{lemma}\label{lem:perfect-stable-1-robust}
In a 2-connected graph, the 1-robust maximal independent sets are exactly the independent 2-dominating sets.
\end{lemma}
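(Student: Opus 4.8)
The plan is to prove the two inclusions separately, using the $2$-connectivity hypothesis to simplify the robustness condition via Property~\ref{prop:connectivity}. Fix a $2$-connected graph $G$ and an MIS $S$. Since $G$ is $2$-connected, by Property~\ref{prop:connectivity} the set $S$ is $1$-robust as an MIS if and only if after the removal of \emph{any single} edge, $S$ is still an MIS. Note that removing an edge can never break independence of $S$, so the only way $S$ can fail to be an MIS after an edge removal is if some vertex $v\notin S$ loses \emph{all} of its neighbors in $S$; since removing one edge destroys at most one edge incident to $v$, this happens exactly when $v$ has exactly one neighbor in $S$ and we remove that edge.

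First I would show that an independent $2$-dominating set is a $1$-robust MIS. If $S$ is independent and $2$-dominating, then it is in particular a maximal independent set (every vertex not in $S$ has a neighbor in $S$, and $S$ is independent). Moreover, every $v\notin S$ has at least two neighbors in $S$, so removing a single edge leaves $v$ with at least one neighbor in $S$; hence $S$ remains dominating and independent, i.e.\ still an MIS. By the reformulation above, $S$ is $1$-robust.

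Conversely, I would show that a $1$-robust MIS is independent $2$-dominating. Independence is immediate. Suppose for contradiction that some $v\notin S$ has a unique neighbor $s\in S$. The edge $(v,s)$ is a single edge; I claim removing it does not disconnect $G$, which follows because $G$ is $2$-connected (equivalently, $2$-edge-connected has no bridges). After removing $(v,s)$, vertex $v$ has no neighbor in $S$ and is not in $S$, so $S$ is no longer maximal — contradicting $1$-robustness. Hence every $v\notin S$ has at least two neighbors in $S$, so $S$ is independent $2$-dominating.

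The main obstacle is essentially bookkeeping rather than depth: one must be careful that $k$-robustness is defined with the ``graph stays connected'' escape clause, so the argument genuinely needs the no-bridge property of $2$-connected graphs to rule out the disconnecting case — this is exactly where the hypothesis is used, and it is the step a careless proof would skip. Everything else reduces to the elementary observation that a single edge removal changes each vertex's neighborhood in $S$ by at most one element.
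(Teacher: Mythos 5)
Your proof is correct and follows essentially the same route as the paper: both use Property~\ref{prop:connectivity} (no bridges in a 2-connected graph) to drop the connectivity clause, then observe that 1-robustness of an MIS amounts to every non-selected vertex keeping a selected neighbor after any single edge removal, which is exactly 2-domination. Your write-up just spells out the elementary bookkeeping in more detail than the paper does.
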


\begin{proof}
As a consequence of Property~\ref{prop:connectivity}, in a 2-connected graph, a 1-robust MIS is an MIS that is robust against the removal of any edge (that is, we can forget about the preserved connectivity in the robustness definition). 
This means that every node not in the MIS is covered twice, otherwise one could break the maximality by removing the edge between the node covered only once and the node that covers it. In other words, the independent set must be 2-dominating.
For the other direction it suffices to note that any independent dominating set is a maximal independent set.   
\end{proof}

At that point, plugging Lemma~\ref{lem:NP-perfect-stable} and Lemma~\ref{lem:perfect-stable-1-robust} we get that deciding whether there exists a 1-robust MIS in a graph is NP-hard, even if we assume 2-connectivity. This last lemma is the final step to prove Theorem~\ref{thm:MIS-NP-hard}. 

\begin{lemma}
For any 2-connected graph $G$ 
and any integer $k>1$, we can build in polynomial-time a graph $G'$, such that: $G$ has a 1-robust MIS if and only if $G'$ has a $2k-1$-robust MIS. 
\end{lemma}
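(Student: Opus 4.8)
The plan is to take $G'$ to be the \emph{$k$-fold independent blow-up} of $G$: replace each vertex $v$ of $G$ by an independent set $V_v$ of $k$ ``clones'', and replace each edge $\{u,v\}$ of $G$ by a complete bipartite graph between $V_u$ and $V_v$. This is computable in polynomial time (it has $k|V|$ vertices and $k^2|E|$ edges). The intuition is that each edge of $G$ becomes a bundle of $k^2$ edges and each ``act of domination'' becomes $k$ parallel ones, so that being robust against $2k-1$ removals in $G'$ should amount to being robust against a single removal in $G$; combined with Lemma~\ref{lem:perfect-stable-1-robust} this will close the chain of reductions started in this section. (We may assume $G$ has at least one edge, so, being $2$-edge-connected, it has minimum degree at least $2$.)

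I would first record three structural facts. (1) \emph{$G'$ is $2k$-edge-connected.} Any edge cut of $G'$ either splits some $V_v$ into two nonempty parts, in which case it has at least $2k$ edges because two clones of $v$ share a common neighbourhood of size $k\deg_G(v)\ge 2k$ (one may check the cut contains, for a clone on each side, its edges to the other side's neighbourhood, and these account for $\deg_G(v)\cdot k$ edges), or it is the blow-up of an edge cut of $G$, which has at least $2$ edges since $G$ is $2$-edge-connected, hence at least $2k^2\ge 2k$ edges. (2) \emph{The maximal independent sets of $G'$ are exactly the sets $S_A:=\bigcup_{v\in A}V_v$ for $A$ a maximal independent set of $G$}: an independent set meeting $V_v$ meets no $V_u$ with $u\sim v$, maximality forces it to contain all of $V_v$, and then maximality of the set is equivalent to $A$ being dominating in $G$. (3) By Property~\ref{prop:connectivity} applied to the $2k$-edge-connected graph $G'$ (with its ``$k$'' equal to $2k-1$), a maximal independent set of $G'$ is $(2k-1)$-robust if and only if it stays maximal after the removal of any $2k-1$ edges; since a maximal independent set is dominating, this holds exactly when every vertex outside the set has at least $2k$ neighbours inside it.

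Then I would assemble these. A non-selected clone lies in some $V_w$ with $w\notin A$, and its neighbours in $G'$ are $\bigcup_{u\sim w}V_u$, so it has exactly $k\cdot|N_G(w)\cap A|$ neighbours in $S_A$. By fact (3), $S_A$ is $(2k-1)$-robust in $G'$ if and only if every $w\notin A$ has at least two neighbours in $A$, i.e.\ if and only if $A$ is an independent $2$-dominating set of $G$. Since by fact (2) every MIS of $G'$ is some $S_A$, we get that $G'$ has a $(2k-1)$-robust MIS if and only if $G$ has an independent $2$-dominating set, which by Lemma~\ref{lem:perfect-stable-1-robust} (recall $G$ is $2$-connected) is equivalent to $G$ having a $1$-robust MIS. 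This is precisely the claimed equivalence, and plugging it into Lemma~\ref{lem:NP-perfect-stable} finishes the proof of Theorem~\ref{thm:MIS-NP-hard}.

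The steps needing the most care are the edge-connectivity bound in fact (1) — one must rule out a cut of size below $2k$ hidden inside a partial split of some $V_v$ — and the arithmetic in fact (3): the key point is $k\le 2k-1<2k$ for $k\ge 1$, so a vertex whose solution-neighbourhood in $G$ has size $1$ acquires exactly $k$ solution-neighbours in $G'$, few enough to be disconnected from the solution by $2k-1$ removals, whereas size $2$ already yields $2k$ solution-neighbours in $G'$, which survive any $2k-1$ removals. The rest is routine bookkeeping on independence and maximality under the blow-up.
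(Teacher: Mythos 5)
Your proposal is correct and is essentially the paper's own proof: the graph $G'$ you build (the $k$-fold independent blow-up) is exactly the paper's ``$k$ copies with all cross edges $(u_x,v_y)$'' construction, and the argument proceeds the same way, via the observation that every MIS of $G'$ is a union of full clone-classes of an MIS of $G$ and a count of selected neighbours ($k$ versus at least $2k$). The only difference is cosmetic: you make the connectivity step explicit by proving $G'$ is $2k$-edge-connected and invoking Property~\ref{prop:connectivity}, a point the paper passes over with a brief ``given the connectivity''.
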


\begin{proof}
We build $G'$ in the following way. Take $k$ copies of $G$, denoted $G_1,..., G_k$, with the notation that $u_x$ is the copy of vertex $u$ in the x-th copy. 
For every edge $(u,v)$ of $G$, we add the edge $(u_x,v_y)$ for any pair $x,y \in 1,...,k$. 

Let us first establish the following claim. An MIS in the graph $G'$ necessarily has the following form: it is the union of the exact same set repeated on each copy. 
Indeed, let $u_i$ be in the MIS. For any $j\ne i$, all the neighbors of $u_j$  in the copy $G_j$ are a neighbor of $u_i$, which implies that they are not in the MIS. 
Hence, no neighbor of $u$ in any copy can be in the MIS. As those nodes are the only neighbors of $u_j$, it implies that $u_j$ is also in the MIS.


Now suppose that $G$ has a 1-robust MIS. We can select the clones of this MIS in each copy, and build an MIS for $G'$ (the independence and maximality are easy to check). 
In this MIS of $G'$, every non-selected vertex has at least $2k$ selected neighbors, therefore this MIS is $2k-1$ robust. 

Finally, suppose that $G'$ has a $2k-1$ robust MIS. Thanks to the claim above, we know that this MIS is the same set of vertices repeated on each copy. We claim that when restricted to a given copy, this MIS is 1-robust. Indeed, if it were not, then there would be one non-selected vertex with at most one selected neighbor, and this would mean that in $G'$ this vertex would have only $k$ selected neighbors, which contradicts the $2k-1$ robust (given the connectivity).

\end{proof}

\section{Conclusions}
\label{sec:concl}

In this paper we have developed the theory of robustness in several ways: adding granularity and studying new natural problems to explore its diversity. 
The next step is to fill in the gaps in our set of results: characterizing exactly the classes $\mathcal{U}^k_{MIS}$, and understanding the complexity of answering the existential question for maximal matching and minimum dominating set. 
We believe that a polynomial-time algorithm can be designed to answer the existential question in the case of maximal matching with $k=1$, with an approach similar to the one of~\cite{CasteigtsDPR20} for MDS (that is, via a careful dynamic programming work on a tree-like decomposition of the graphs).  
A more long-term goal is to reuse the insights gathered by studying robustness to help the design of dynamic algorithms.

\newpage
\paragraph*{Acknowledgements.}

We thank Nicolas El Maalouly for fruitful discussions about perfect matchings and Dennis Olivetti for pointing out reference \cite{Summer79}. 

\DeclareUrlCommand{\Doi}{\urlstyle{same}}
\renewcommand{\doi}[1]{\href{https://doi.org/#1}{\footnotesize\sf doi:\Doi{#1}}}

\bibliographystyle{plainurl}
\bibliography{biblio-robustness}
\end{document}